\newif\ifjournal\journalfalse
\newif\ifcamready\camreadytrue
\newtheorem{definition}{Definition}
\newtheorem{theorem}{Theorem}
\newtheorem{remark}{Remark}
\newtheorem{observation}{Observation}
\newtheorem{proof}{Proof}
\newtheorem{corollary}{Corollary}
\newtheorem{lemma}{Lemma}
\newcommand{\br}[1]{\ensuremath{\langle#1\rangle}\xspace}
\lstdefinelanguage{stackalg}{
  morekeywords={while, proc, function, else, begin, do, first, not, exists, then, end, if, each, in,  pop, push,  done, for
 , create,},
  sensitive=true,
  morecomment=[l]{//},
  morecomment=[s]{/*}{*/},
}
\lstdefinelanguage{ncode}{%
  morekeywords={ send, receive, future, halt, create, wait, mode, soft, hard},
}
\newcommand{\ar}{auditable restoration\xspace}
\newcommand{\Ar}{Auditable restoration\xspace}
\newcommand{\AR}{Auditable Restoration\xspace}
\newcommand{\name}{AR}
\newcommand{\bname}{BAR}
\newcommand{\Gd}{lc}
\newcommand{\au}{auditable event\xspace}
\newcommand{\aus}{auditable events\xspace}
\newcommand{\restore}{restore\xspace}
\newcommand{\stable}{stable\xspace}
\newcommand{\restoration}{restoration\xspace}
\newcommand{\leader}{leader\xspace}
\begin{document}

\title{\AR of Distributed Programs}

\author{\IEEEauthorblockN{Reza Hajisheykhi}
\IEEEauthorblockA{Computer Science and\\Engineering Department\\
Michigan State University\\
East Lansing, MI, USA\\
Email: hajishey@cse.msu.edu}
\and
\IEEEauthorblockN{Mohammad Roohitavaf}
\IEEEauthorblockA{Computer Science and\\Engineering Department\\
Michigan State University\\
East Lansing, MI, USA\\
Email: roohitav@cse.msu.edu}
\and
\IEEEauthorblockN{Sandeep Kulkarni}
\IEEEauthorblockA{Computer Science and\\Engineering Department\\
Michigan State University\\
East Lansing, MI, USA\\
Email: sandeep@cse.msu.edu}}

\maketitle
\begin{abstract}
We focus on a protocol for auditable restoration of distributed systems. The need for such protocol arises due to conflicting requirements (e.g., access to the system should be restricted but emergency access should be provided). One can design such systems with a tamper detection approach (based on the intuition of {\em break the glass door}). However, in a distributed system, such tampering, which are denoted as auditable events, is visible only for a single node. This is unacceptable since the actions they take in these situations can be different than those in the normal mode. Moreover, eventually, the auditable event needs to be cleared so that system resumes the normal operation.

With this motivation, in this paper, we present a protocol for auditable restoration, where any process can potentially identify an auditable event. Whenever a new auditable event occurs, the system must reach an {\em auditable state} where every process is aware of the auditable event. Only after the system reaches an auditable state, it can begin the operation of restoration. Although any process can observe an auditable event, we require that only {\em authorized} processes can begin the task of restoration. Moreover, these processes can begin the restoration only when the system is in an auditable state. Our protocol is self-stabilizing and has bounded state space. It can effectively handle the case where faults or auditable events occur during the restoration protocol. Moreover, it can be used to provide auditable restoration to other distributed protocol.
\end{abstract}

\begin{IEEEkeywords}
Self-stabilization, reactive systems, adversary, formal methods
\end{IEEEkeywords} 
\section{Introduction}
\label{sec:intro}

\subsection{A Brief History and the Need for Auditable Restoration}
\label{sec:history}

Fault-tolerance focuses on the problem of what happens if the program is perturbed by undesired perturbations. In other words, it focuses on what happens if the program is perturbed beyond its legitimate states (a.k.a. {\em invariant}). There have been substantial ad-hoc {\em operational} approaches --designed for specific types of faults-- for providing fault-tolerance. For example, the idea of recovery blocks \cite{recBlock74} introduced the notion of acceptance conditions that should be satisfied at certain points in the computation. If these conditions are not satisfied, the program is restored to previous state from where another recovery block is executed. Checkpointing and recovery based approaches provide mechanism to restore the program to a previous checkpoint.
Dijkstra \cite{dij} introduced an approach for {\em specifying} (i.e., identifying what the program should provide irrespective of how it is achieved) one-type of fault-tolerance, namely stabilization \cite{dij}. A stabilizing program partitioned the state space of the program into legitimate states (predicate $S$ in Figure \ref{fig:dij}) and other states. It is required that (1) starting from any state in $S$, the program always stays in $S$ and (2) starting from any state in $\neg S$, the program recovers to a state in $S$.

Although stabilization is desirable for many programs, it is not suitable for some programs. For example, it may be impossible to provide recovery from all possible states in $\neg S$. Also, it may be desirable to satisfy certain safety properties during recovery. Arora and Gouda \cite{ag93} introduced another approach for formalizing a fault-tolerant system that ensures {\em convergence} in the presence of transient faults (e.g., soft errors, loss of coordination, bad initialization), say $f$. That is, from any state/configuration, the system recovers to its invariant $S$, in a finite number of steps. Moreover, from its invariant, the executions of the system satisfy its specifications and remain in the invariant; i.e., {\em closure}. They distinguish two types of fault-tolerant systems: {\em masking} and {\em nonmasking}. In the former the effects of failure is completely invisible to the application. In other words, the invariant is equal to the fault-span ($S=FS$ in Figure \ref{fig:mask}). In the latter, the fault-affected system may violate the invariant but the continued execution of the system yields a state where the invariant is satisfied (See Figure \ref{fig:mask}).

The approach by Arora and Gouda intuitively requires that after faults stop occurring, the program provides the {\em original functionality}. However, in some cases, restoring the program to {\em original} legitimate states so that it satisfies the subsequent specification may be impossible. Such a concept has been considered in \cite{tpdsHerlihy91} where authors introduce the notion of {\em graceful degradation}. In graceful degradation (cf. Figure \ref{fig:deg}), a system satisfies its original specification when no faults have occurred. After occurrence of faults (and when faults stop occurring), it may not restore itself to the original legitimate states ($S$ in Figure \ref{fig:deg}) but rather to a larger set of states ($S'$ in Figure \ref{fig:deg}) from where it satisfies a weaker specification. In other words, in this case, the system may not satisfy the original specification even after faults have stopped.

In some instances, especially where the perturbations are {\em security related}, it is not sufficient to restore the program to its original (or somewhat degraded) behavior. The notion of multi-phase recovery was introduced for such programs \cite{bkfm09}. Specifically, in these programs, it is necessary that recovery is accomplished in a sequence of phases, each ensuring that the program satisfies certain properties. One of the properties of interest is {\em strict 2-phase recovery}, where the program first recovers to states $Q$ that are strictly disjoint from legitimate states. Subsequently, it recovers to legitimate states (See Figure \ref{fig:2ph}).

The goal of auditable restoration is motivated by combining the principles of the strict 2-phase recovery and the principles of fault-tolerance. Intuitively, the goal of auditable restoration is to classify system perturbations into faults and auditable events, provide fault-tolerance (similar to that in Figure \ref{fig:mask}) to the faults and ensure that strict 2-phase recovery is provided for auditable events. Unfortunately, this cannot be achieved for arbitrary auditable events since in \cite{bkfm09}, it has been shown that adding strict 2-phase recovery to even a centralized program is NP-complete. Our focus is on auditable events that are (immediately) detectable. Faults may or may not be detectable. Unfortunately, adding strict 2-phase recovery to a program even in centralized systems has been shown to be NP-complete.

\subsection{Goals of Auditable Restoration}

In our work, we consider the case that the system is perturbed by possible faults and possible tampering that we call as {\em auditable events}. Given that both of these are perturbations of the system, we distinguish between them based on how and why they occur. By faults, we mean events that are random in nature. These include process failure, message losses, transient faults, etc. By auditable events, we mean events that are deliberate in nature for which a detection mechanism has been created. Among other things, the need for managing such events arises due to conflicting nature of system requirements. For example, consider a requirement that states that each process in a distributed system is physically secure. This requirement may conflict with another requirement such as each node must be provided emergency access (e.g., for firefighters).
As another example, consider the requirement that each system access must be authenticated. This may conflict with the requirement for (potential) unauthorized access in a crisis. Examples of this type are well-known in the domain of power systems, medical record systems, etc., where the problem is solved by techniques such as writing down the password in a physically secure location that can be broken into during crisis or by allowing unlimited access and using logs as a deterrent for unauthorized access.
Yet another example includes services such as gmail that (can) require that everytime a user logs in, he/she can authenticate via 2-factor authentication such as user's cell phone. To deal with situations where the user may not have access to a cell phone, the user is provided with a list of `one-time' passwords and it is assumed that these will always stay in the control of the users.
However, none of these solutions are fully satisfactory.

In this paper, we focus on a solution that is motivated by the notion of 2-phase recovery. Specifically, we would like to have the following properties:

\begin{itemize}
\item We consider the case where the auditable events are immediately detectable. This is the case in all scenarios discussed above. For example, the use of `one-time' password for gmail or violation of physical security of a process is detectable. Likewise, the passwords stored in a physically secure location can be different from those used by ordinary users making them detectable.
\item We require that if some process is affected by an auditable event, then eventually all (respectively, relevant)\footnote{In this paper, for simplicity, we assume that all processes are relevant.} processes in the system are aware of this auditable event. For example, if one process is physically tampered then it will not automatically cause the tampering to be detected at other processes. This would allow the possibility that they only provide the `emergency' services and protect the more sensitive information. This detection must occur even in the presence of faults (except those that permanently fail all processes that were aware of auditable event)\footnote{If all processes that are (directly or indirectly) aware of the auditable events fail then it is impossible to distinguish this from the scenario where these processes fail before the auditable events.}. We denote such states as {\em auditable states} in that all processes are aware of a new auditable event ($S2$ in Figure \ref{fig:reset}).
\item After all processes are aware of the auditable event, there exists at least one process that can begin the task of restoring the system to a normal state, thereby clearing the auditable event. This operation may be automated or could involve human-in-the-loop. However, the system should ensure that this operation cannot be initiated until all processes are aware of the auditable event. This ensures that any time the normal operation is restored, all processes are aware of the auditable event.
\item If the operation to restore the system to normal state succeeds even if it is perturbed by faults such as failure of processes. However, if an auditable event occurs while the system is being restored to the normal state, the auditable event has a higher priority, i.e., the operation to restore to normal state would be canceled until it is initiated at a later time.
\end{itemize}

Observe that such a solution is a variation of strict 2-phase recovery. When an auditable event occurs, the system is guaranteed to recover to a state where all processes are aware of this event ($S2$ in Figure \ref{fig:reset}). And, subsequently, the system recovers to its normal legitimate states ($S1$ in Figure \ref{fig:reset}). Our solution has the following properties:

\begin{itemize}
\item Our solution is self-stabilizing. If it is perturbed to an arbitrary state, it will recover to a state from where all future auditable events will be handled correctly.
\item Our solution can be implemented with a finite state space. The total state does not increase with the length of the computation. It is well-known that achieving finite state space for self-stabilizing programs is difficult \cite{ll}.
\item Our solution ensures that if auditable events occur at multiple processes {\em simultaneously}, it will be treated as one event restoring the system to the auditable state. However, if an auditable event occurs after the system restoration to normal states has begun (or after system restoration is complete), it will be treated as a new auditable event.
\item After the occurrence of an auditable event, the system recovers to the {\em auditable state} even if it is perturbed by failure of processes, failure of channels, as well as certain transient faults.
\item No process can initiate the restoration to normal operation unless the system is in an auditable state. In other words, in Figure \ref{fig:reset}, a process cannot begin restoration to normal state unless the system was recently in $S2$.
\item After the system restores to an auditable state and some process initiates the restoration to normal operation, it completes correctly even if it is perturbed by faults such as failure of processes or channels. However, if it is perturbed by an auditable event, the system recovers to the auditable state again.
\end{itemize}

\begin{figure*}
    \begin{center}
        \subfigure[\small Stabilization]{%
            \label{fig:dij}
            \includegraphics[height=0.24\textwidth , width=0.30\textwidth]{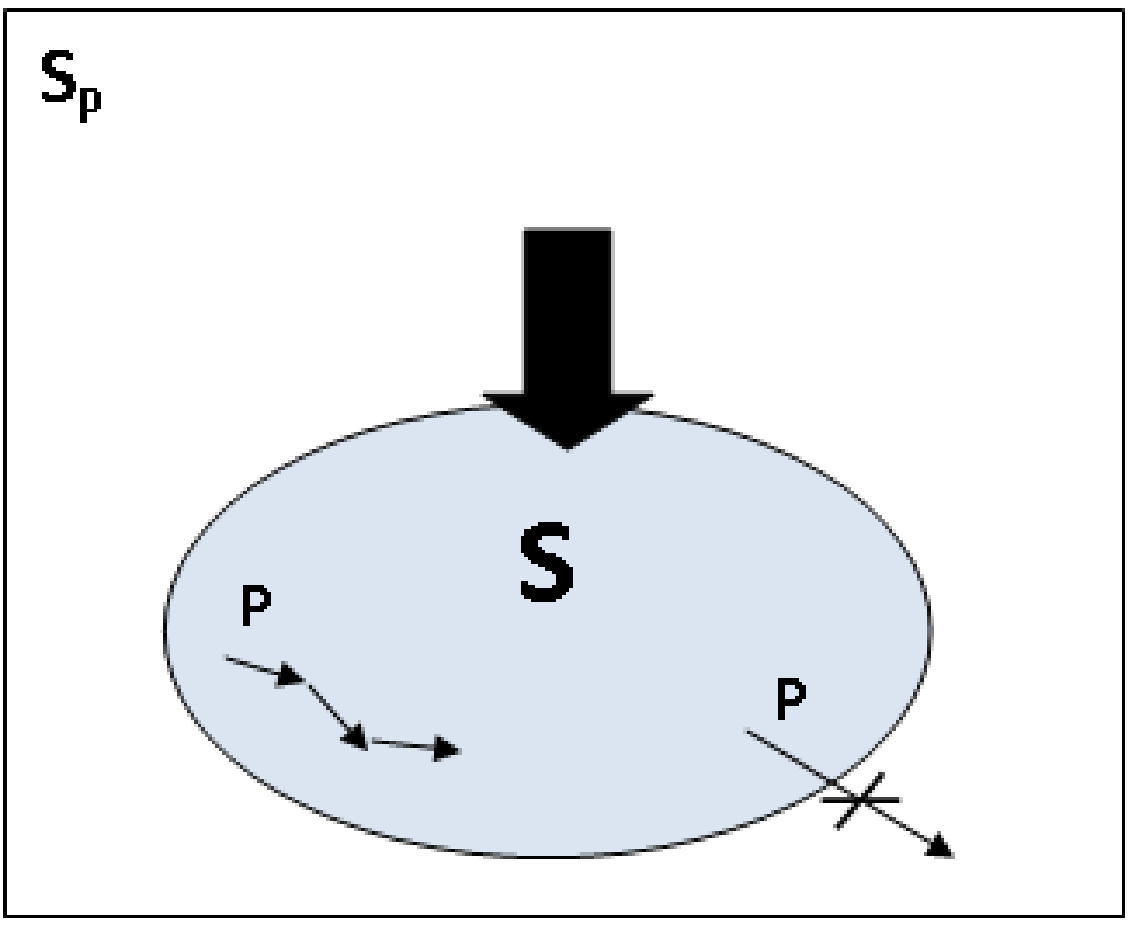}
        }%
        \subfigure[\small Masking, nonmasking stabilization]{%
            \label{fig:mask}
            \includegraphics[height=0.24\textwidth , width=0.29\textwidth]{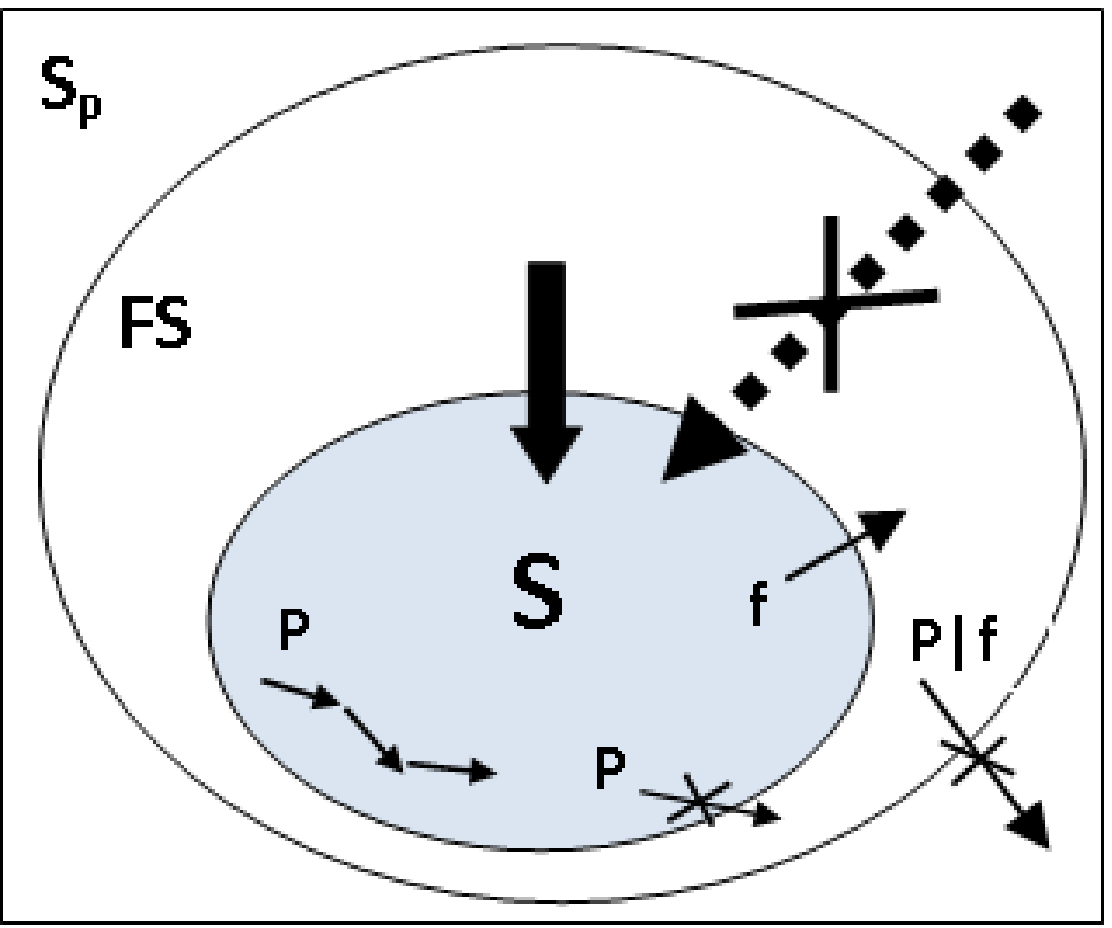}
        }%
        \subfigure[Graceful degradation]{%
           \label{fig:deg}
           \includegraphics[height=0.24\textwidth , width=0.29\textwidth]{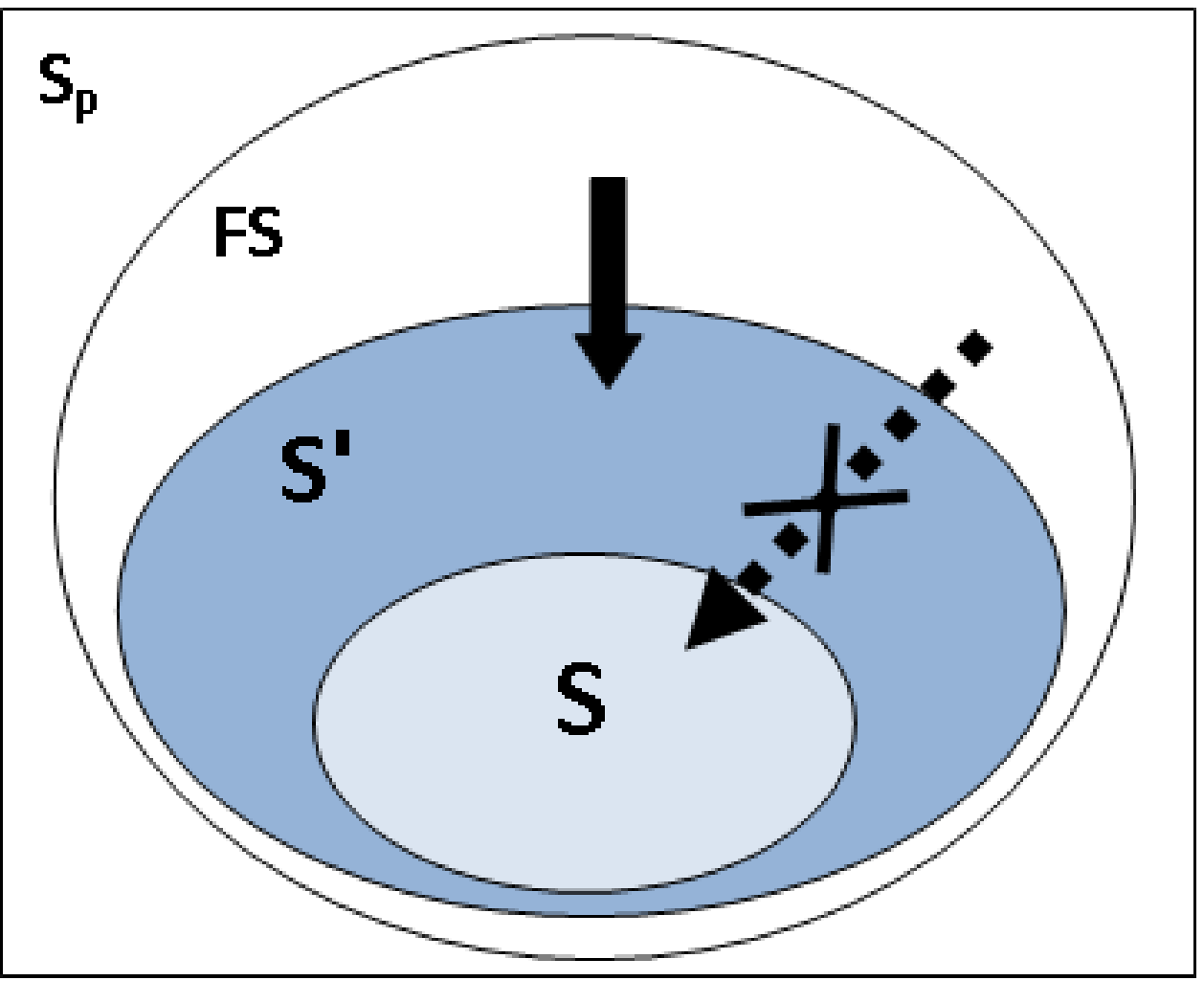}
        } \\ 
        \subfigure[Strict 2-phase recovery]{%
            \label{fig:2ph}
            \includegraphics[height=0.24\textwidth , width=0.29\textwidth]{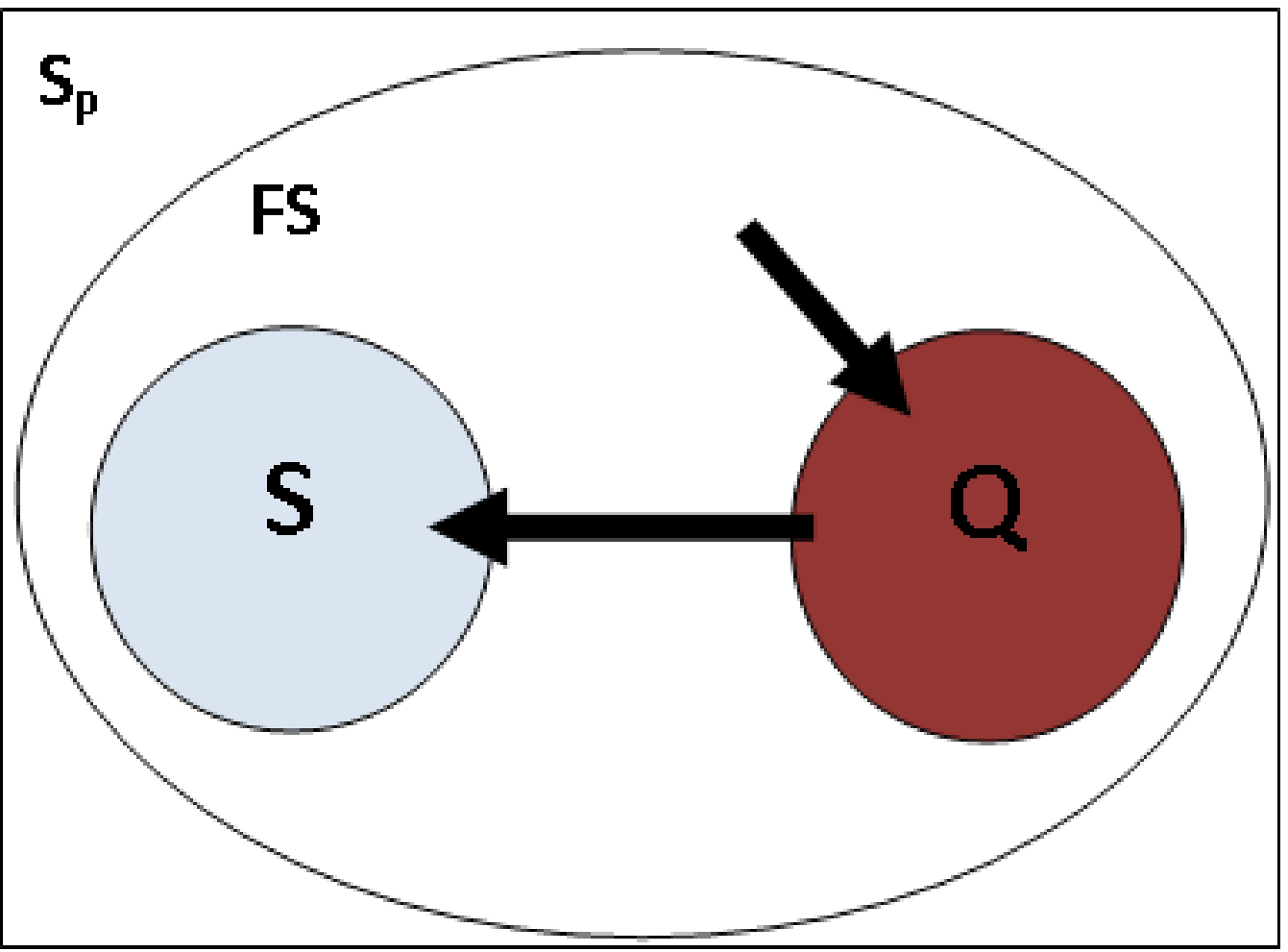}
        } %
        \subfigure[\Ar]{%
           \label{fig:reset}
           \includegraphics[height=0.24\textwidth , width=0.29\textwidth]{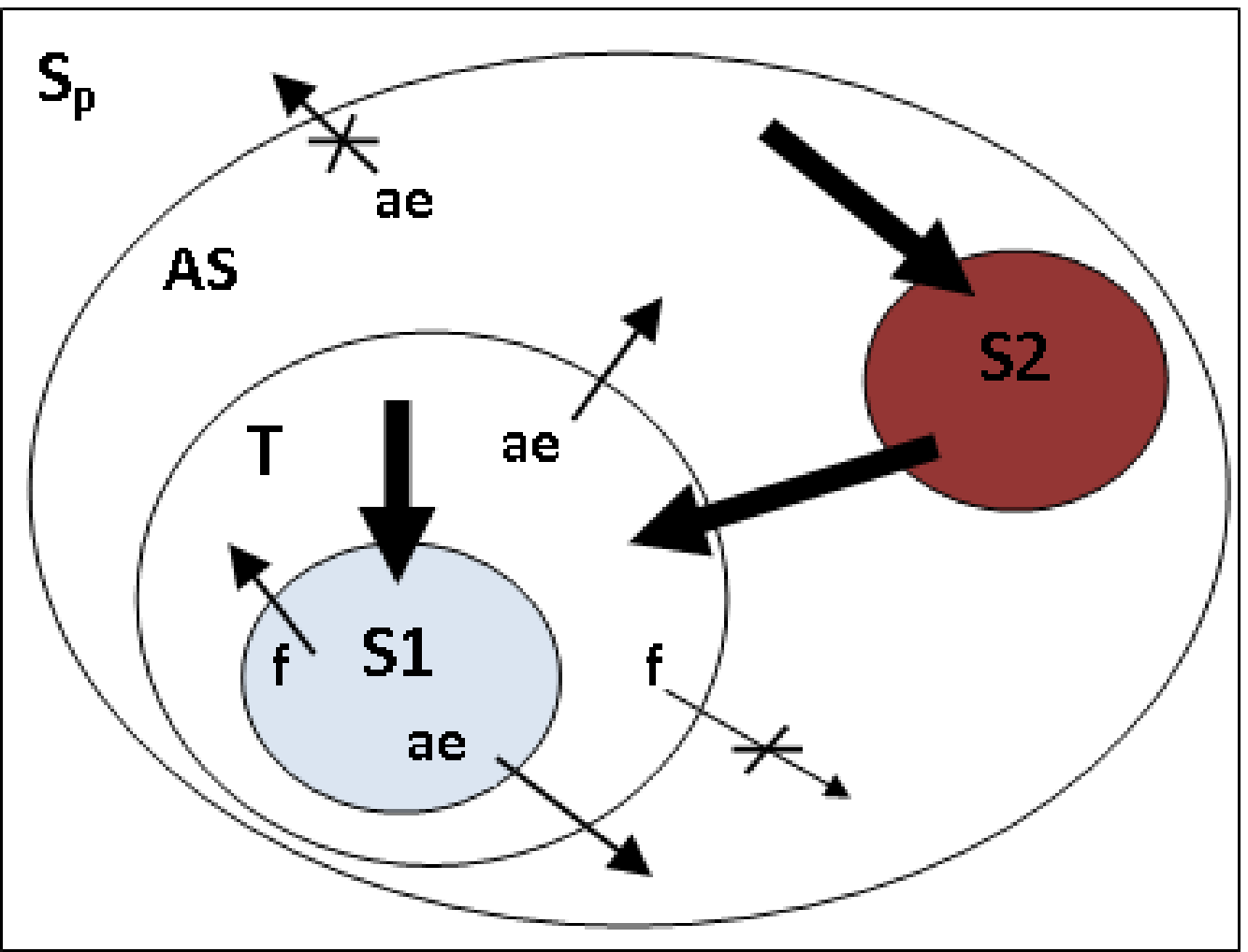}
        } %
        \subfigure{%
           \label{fig:desc}
           \includegraphics[height=0.20\textwidth , width=0.30\textwidth]{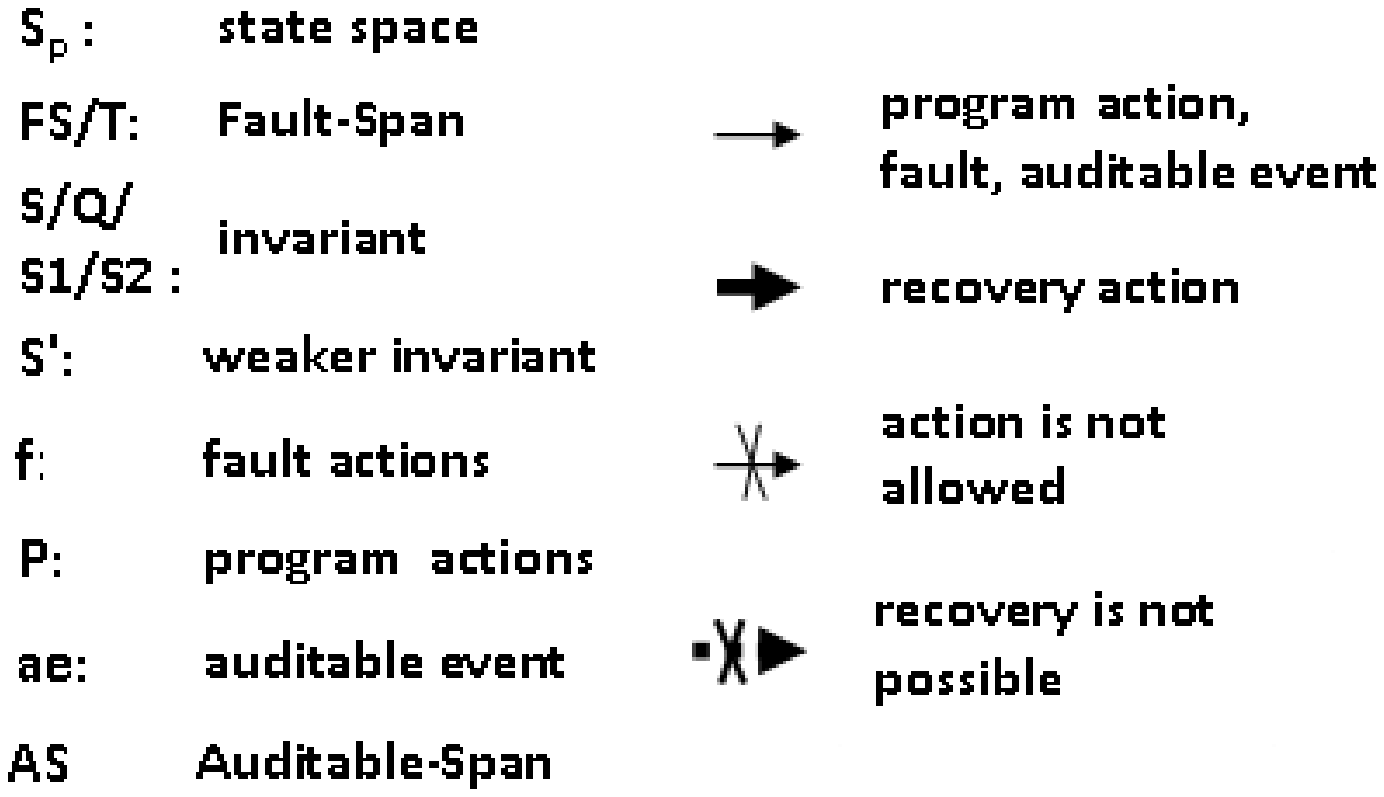}
        } \\ 

    \end{center}
    \caption{%
        Different types of stabilizations.
     }%
    \label{fig:stabilization}

\end{figure*}

{\bf Organization of the paper. } \
The rest of the paper is organized as follows:
In Section~\ref{sec:defs}, we present the preliminary concepts of stabilization and fault-tolerance and introduce the notion of \ar in Section~\ref{sec:ardef}. Section~\ref{sec:ar} explains \ar protocol while the variables are unbounded. We bound these variables in Section \ref{sec:bound} and discuss the necessary changes in our protocol. Section~\ref{sec:related} provides related work, and, finally, we conclude in Section \ref{sec:concl}.

\section{Preliminaries}
\label{sec:defs}

\newcommand{\adv}{{adv}\xspace}
\newcommand{\pure}{pure\xspace}

In this section, we first recall the formal definitions of programs, faults, \aus, and self stabilization adapted from \cite{dij,ag93,wssGouda01}. We then formally define our proposed \ar mechanism.

\begin{definition}[Program]
\label{def:program}

A program $p$ is specified in terms of a set of variables $V$, each of which is associated with a domain of possible values, and a finite set of actions of the form

\begin{tabbing}
\hspace*{3mm} $\langle name \rangle$ \  :: \ $\langle guard \rangle$ \hspace*{2mm} $\longrightarrow$ \hspace*{2mm} $\langle statement \rangle$
\end{tabbing}

where $guard$ is a Boolean expression over program variables, $statement$ updates the program variables.

\end{definition}

For such a program, say $p$, we define the notion of state, state space and transitions next.

\begin{definition}[(Program) State and State Space]
A state of the program is obtained by assigning each program variable a value from its domain. The state space (denoted by $S_p$) of such program is the set of all possible states.
\end{definition}

\begin{definition}[Enabled]
We say that an action {\em g} $\longrightarrow$ $st$ is enabled in state $s$ iff $g$ evaluates to true in $s$.
\end{definition}

\begin{definition}[Transitions corresponding to an action]
The transitions corresponding to an action $ac$ of the form {\em g} $\longrightarrow$ $st$ (denoted by $\delta_{ac}$) is a subset of $S_p \times S_p$ and is obtained as
$\{ (s_0, s_1) | g$ is true in $s_0$ and $s_1$ is obtained executing $st$ in state $s_0$\}.
\end{definition}

\begin{definition}[Transitions corresponding to a program]
The transitions of program $p$  (denoted by $\delta_p$) consisting of actions $ac_1, ac_2, \cdots ac_m$ is

$\delta_p = \bigcup_{i=1}^{m} ac_i \cup$
$\{(s_0, s_0)| ac_1, ac_2 \cdots  ac_m$ are not enabled in $s_0$\}.
\end{definition}

\begin{remark}
Observe that based on the above definition, for any state $s$, there exists at least one transition in $\delta_p$ that originates from $s$. This transition may be of the form $(s, s)$.
\end{remark}

For subsequent discussion, let the state space of program $p$ be denoted by $S_p$ and let its transitions be denoted by $\delta_p$.

\begin{definition}[Computation]
\label{def:comp}

We say that a sequence $\br{s_0, s_1, s_2, ...}$ is a \emph{computation} iff

\begin{itemize}
\item $\forall j \geq 0 :: (s_j, s_{j+1}) \in \delta_p$
\end{itemize}
\end{definition}

\begin{definition}[Closure]
A state predicate $S$ is \emph{closed} in $p$ iff $\forall s_0, s_1 \in S_p :: (s_0 \in S \wedge (s_0, s_1) \in \delta_p) \; \Rightarrow \; (s_1 \in S)$.
\end{definition}

\begin{definition}[Invariant]
A state predicate $S$ is an \emph{invariant} of $p$ iff $S$ is closed in $p$.
\end{definition}

\begin{remark}
Normally, the definition of invariant (legitimate states) also includes a requirement that computations of $p$ that start from an invariant state are correct with respect to its specification. The theory of \ar is independent of the behaviors of the program inside legitimate states. Instead, it only focuses on the behavior of $p$ outside its legitimate states. We have defined the invariant in terms of the closure property alone since it is the only relevant property in the definitions/theorems/examples in this paper.
\end{remark}

\begin{definition}[Convergence]
Let $S$ and $T$ be state predicates of $p$. We say that $T$ \emph{converges} to $S$ in $p$
iff
\begin{itemize}
\item  $S \subseteq T$,
\item $S$ is closed in $p$,
\item $T$ is closed in $p$, and
\item For any computation $\sigma$ (=$\br{s_0, s_1, s_2, ...}$ ) of $p$ if $s_0\in T$ then there exists $l$ such that $s_l \in S$.
\end{itemize}
\end{definition}

\begin{definition}[Stabilization]
We say that program $p$ is \emph{stabilizing} for invariant $S$ iff $S_p$ converges to $S$ in $p$.
\end{definition}

\begin{definition}[Faults]
\emph{Faults} for program $p = \br{S_p, \delta_p}$ is a subset of $S_p \times S_p$; i.e., the faults can perturb the program to any arbitrary state.
\end{definition}

\begin{definition}[Auditable Events]
\emph{Auditable events} for program $p = \br{S_p, \delta_p}$ is a subset of $S_p \times S_p$.
\end{definition}

\begin{remark}
Both faults and \aus are a subset of $S_p \times S_p$. i.e., they both are a set of transitions. However, as discussed earlier, the goal of faults is to model events that are random in nature for which recovery to legitimate states is desired. By contrast, auditable events are deliberate. Additionally, there is a mechanism to detect auditable events and, consequently, we want an auditable restoration technique when \aus occur.
\end{remark}

\begin{definition}[F-Span]
Let $S$ be the invariant of program $p$. We say that a state predicate $T$ is a \emph{f-span} of $p$ from $S$ iff the following conditions are satisfied: (1) $S \subseteq T$, and (2) $T$ is closed in $f \cup \delta_p$.
\end{definition}

\section{Defining Auditable Restoration}
\label{sec:ardef}

In this section, we formally define the notion of auditable restoration. The intuition behind this is as follows: Let $S1$ denote the legitimate states of the program. Let $T$ be a fault-span corresponding to the set of faults $f$. Auditable events perturb the program outside $T$. If this happens, we want to ensure that the system reaches a state in $S2$. Subsequently, we want to restore the system to a state in $S1$.

\begin{definition}[\AR] \label{def:ar}
Let $ae$ and $f$ be \aus and set of faults, respectively, for program $p$.
We say that program $p$ is an \emph{\ar} program with auditable events $ae$ and faults $f$ for \emph{invariant} $S1$ and auditable state predicate $S2$ iff there exists $T$
\begin{itemize}
\item $T$ converges to $S1$ in $p$,
\item $T$ is closed in $\delta_p \cup f$,
\item For any sequence $\sigma$ (=$\br{s_0, s_1, s_2, ...}),w$ s.t.
\end{itemize}
\vspace*{-2mm}
\hspace*{4cm} $s_0 \in T \ \wedge$\\
\hspace*{4cm}$(s_0,s_1) \in ae \ \wedge$\\
\hspace*{4cm}$s_1 \not \in T \wedge$\\
\hspace*{4cm}$(s_m,s_{m+1}) \in \delta_p \cup \delta_f \cup ae \ \wedge$\\
\hspace*{4cm}$m \geq w \Rightarrow (s_m,s_{m+1}) \in \delta_p$

\noindent \hspace*{1cm}$\Rightarrow \exists m,n: ((n > m \geq w) \wedge (s_m \in S2) \wedge (s_n \in S1))$
\end{definition}

\section{\AR for Distributed Programs}
\label{sec:ar}

In this section we explain our proposed \ar protocol for distributed programs.

As mentioned earlier, the \ar protocol consists of several processes. Each process is potentially capable of detecting an auditable event. However, only a subset of processes is capable of initiating the \restoration operation. This captures the intuition that {\em clearing of the auditable event} is restricted to {\em authorized} processes only and can possibly involve human-in-the-loop. For simplicity, we assume that there is a unique process assigned this responsibility and the failure of this process is handled with approaches such as leader election \cite{toplasHuang93}.

The \ar protocol is required to provide the following functionalities: (1) in any arbitrary state where no \au exists, the system reaches a state from where subsequent restoration operation completes correctly, (2) in any arbitrary state where some process detects the \au, eventually all processes are aware of this event, (3) some process in the system can detect that all processes have been aware of the \au, and (4) the process that knows all processes are aware of the \au starts a new restoration operation.

The \ar protocol works as follows: It utilizes a stabilizing silent tree rooted at the \leader process, i.e., it reaches a fixpoint state after building the tree even though individual processes are not aware of reaching the fixpoint. Several tree construction algorithms (e.g., \cite{a96}) can be utilized for this approach. The \ar protocol is superimposed on top of a protocol for tree reconstruction, i.e., it only reads the variables of the tree protocol but does not modify them. The only variables of interest from the tree protocol are $P.j$ (denoting the parent of $j$ in the tree) and $l.j$ (identifying the $id$ of the process that $j$ believes to be the \leader in the tree). Additionally, we assume that whenever the tree action is executed, it notifies the \ar protocol so it can take the corresponding action. Since the tree protocol is silent, this indicates that some tree reconstruction is being done due to faults.

In addition, the program maintains the variable $otsn.j$ and $ctsn.j$. Intuitively, they keep track of the number of auditable events that $j$ has been aware of and the number of auditable events after which the system has been restored by the \leader process. Additionally, each process maintains $st.j$ that identifies its state, $sn.j$ that is a sequence number and $res.j$ that is \{0..1\}. In summary, each process $j$ maintains the following variables:

\begin{itemize}
\item $P.j$, which identifies the parent of process $j$;
\item $l.j$, which denotes the $id$ of the process that $j$ believes to be the \leader;
\item $st.j$, which indicates the state of process $j$. This variable can have four different values: {\em \restore}, {\em \stable}, $\bot$ (read {\em bottom}), or $\top$ (read {\em top}). These values indicate that $j$ is in the middle of restoration after an auditable event has occurred, $j$ has completed its task associated with restoration, $j$ is in the middle of reaching to the auditable state, or $j$ has completed its task associated with reaching to the auditable state, respectively.
\item $sn.j$, which denotes a sequence number;
\item $otsn.j$, as describe above;
\item $ctsn.j$, as described above, and
\item $res.j$, which has the domain \{0..1\}.
\end{itemize}

Observe that in the above program, the domain of $sn.j$, $otsn.j$, and $ctsn.j$ is currently unbounded. This is done for simplicity of the presentation. We can bound the domain of these variables without affecting the correctness of the program. This issue is discussed in Section \ref{sec:bound}.

The program consists of 11 actions. The first action, $\name1$, is responsible for detecting the auditable event. As mentioned above, each process has some mechanism that detects the \aus. If an \au is detected, the process $j$ increments $otsn.j$ and propagates it through the system.

\begin{figure}[t]
\begin{tabbing}
${\bf \name1}::$ \{{\em $j$ detects an \au\}}\\

\=\= \hspace*{0.75cm} \= $\longrightarrow otsn.j:=otsn.j+1 $\\\\


${\bf \name2}::otsn.j < otsn.k$\\

\>\>\> $\longrightarrow otsn.j := otsn.k, \ $ {\bf if} $P.j=j$ {\bf then} $res.j:=0$\\\\


${\bf \name3}:: P.j = j \wedge st.j \neq \bot \wedge \ otsn.j > ctsn.j$\\

\>\>\> $\longrightarrow st.j, sn.j, res.j := \bot, sn.j+1, min(res.j+1,1)$\\\\


${\bf \name4}:: st.(P.j) = \bot \wedge sn.j \neq sn.(P.j) \wedge l.j=l.(P.j)$\\

\>\>\> $\longrightarrow st.j, sn.j, res.j :=$\\

\>\>\> \hspace*{1.5cm} $\bot, sn.(P.j),min(res.j+1,1) $\\\\


${\bf \name5}:: (\forall k: P.k = j: otsn.k=otsn.j  \wedge st.k = \top)$\\

\>\>\>  $\wedge (\forall k: k \in Nbr.j: sn.j=sn.k \wedge l.j=l.k) \wedge st.j = \bot $\\

\>\>\> $\longrightarrow st.j := \top,$\\

\>\>\> \hspace*{6mm} $res.j := min(res.k) \ where \ k \in Nbr.j \cup \{j\}$\\

\>\>\> \hspace*{6mm} {\bf if} $(P.j = j \wedge res.j \neq 1)$ {\bf then}\\
\>\>\> \hspace*{11mm} $st.j, sn.j, res.j := $\\
\>\>\> \hspace*{34mm} $\bot, sn.j+1, min(res.j+1,1)$\\
\>\>\> \hspace*{6mm} {\bf else if} $(P.j=j \wedge res.j=1)$ {\bf then}\\
\>\>\> \hspace*{11mm} $ctsn.j:=otsn.j$\\\\

${\bf \name6}::ctsn.j < ctsn.k \longrightarrow ctsn.j = ctsn.k$\\\\


${\bf \name7}:: P.j = j \wedge \ st.j = \top \wedge ctsn.j=otsn.j \ \wedge$\\
\>\>\> \hspace*{45mm} \{{\em authorized to restore}\}\\
\>\>\> $\longrightarrow st.j, sn.j := \restore, sn.j+1$ \\\\


${\bf \name8}:: st.(P.j) = \restore \wedge sn.j \neq sn.(P.j) \ \wedge$ \\
\>\>\> \hspace*{39mm} $l.j=l.k \wedge otsn.j=ctsn.j$\\
\>\>\> $\longrightarrow st.j, sn.j, res.j :=$\\
\>\>\> \hspace*{15mm} $\restore, sn.(P.j), min(res.j+1,1)$\\\\


${\bf \name9}:: (\forall k: P.k = j: sn.j = sn.k \ \wedge st.k = \stable) \ \; \ \wedge$\\

\>\>\> \hspace*{12mm} $ (\forall k: k \in Nbr.j: sn.j=sn.k \wedge l.j=l.k) \ \wedge $\\

\>\>\> \hspace*{56mm} $st.j = \restore$\\

\>\>\>  $\longrightarrow st.j := \stable,$\\

\>\>\> \hspace*{6mm} $res.j := min(res.k): k \in Nbr.j \cup \{j\}$\\

\>\>\> \hspace*{6mm} {\bf if} $(P.j = j \wedge res.j \neq 1)$ {\bf then}\\
\>\>\> \hspace*{11mm} $st.j, sn.j, res.j := $\\
\>\>\> \hspace*{26mm} $\restore, sn.j+1, min(res.j+1,1)$\\
\>\>\> \hspace*{6mm} {\bf else if} $(P.j=j \wedge res.j=1)$ {\bf then}\\
\>\>\> \hspace*{11mm}  \{{\em \restore is complete}\}\\\\


${\bf \name10}:: \neg \Gd.j \longrightarrow st.j,sn.j :=  st.(P.j),sn.(P.j)$\\\\

${\bf \name11}:: \langle$ {\em any tree correction action that affects process $j$} $\rangle$\\
\>\>\> $ \longrightarrow res.j:=0$

\end{tabbing}
\caption{The \ar protocol.}
\label{fig:ar}
\end{figure}

Actions $\name2$--$\name5$ are for notifying all processes in the system about the \aus detected. Specifically, action $\name2$ propagates the changes of $otsn$ value. We assume that this action is a high priority action and executes concurrently with every other action. Hence, for simplicity, we do not show its addition to the rest of the actions.
When the \leader process is notified of the \au, in action $\name3$, it changes its state to $\bot$ and propagates $\bot$ towards its children. Also, in action $\name3$, the \leader sets its $res$ variable to 1 by using $min(res.j+1,1)$ function. (Instead of setting $res$ value to 1 directly, we utilize this function because it would simplify the bounding of $otsn$, $ctsn$ and $sn$ variables in the next section.)
Action $\name4$ propagates $\bot$ towards the leaves. In action $\name5$, when a leaf receives $\bot$, it changes its state to $\top$ and propagates it towards the \leader.
Consider that, action $\name5$ detects whether all processes have participated in the current $\bot$ wave. This detection is made possible by letting each process maintain the variable $res$ that is true only if its neighbors have propagated that wave. In particular, if process $j$ has completed a $\bot$ wave with $res$ false, then the parent of $j$ completes that wave with the $res$ false. It follows that when the \leader completes the wave with the $res$ true, all processes have participated in that wave.
This action also increments the $ctsn$ value.
However, if the \leader fails when its state is $\bot$ and some process with state $\top$ becomes the new \leader, this assumption will be violated. Action $\name3$ guarantees that, even if the \leader fails, the state of all processes will eventually become $\top$ and the \leader process will be aware of that.

Action $\name6$ broadcasts the changes of the $ctsn$ value so that the other processes can also \restore to their legitimate states. We assume that, similar to action $\name2$, action $\name6$ is also a high priority action and executes concurrently with the other actions.

When the \leader ensures that all processes are aware of the \aus, a human input can ask the \leader to initiate a \restoration wave to recover the system to its legitimate state.
Therefore, in action $\name7$, the \leader initiates a distributed restoration wave, marks its state as {\em \restore}, and propagates the \restoration wave to its children.

When a process $j$ receives a restoration wave from its parent, in action $\name8$, $j$ marks its state as {\em \restore} and propagates the wave to its children. When a leaf process $j$ receives a \restoration wave, $j$ restores its state, marks its state as {\em \stable}, and responds to its parent. In action $\name9$, when the \leader receives the response from its children, the restoration wave is complete.
Action $\name9$, like action $\name5$, detects if all processes participated in the current \restoration wave by utilizing the variable $res$.
To represent action $\name10$, first, we define $\Gd.j$ as follows:


\noindent ${\bf \Gd.j} =$

\noindent \hfill $((st.(P.j) = \restore \wedge st.j = \restore) \Rightarrow sn.j := sn.(P.j) \wedge$\\
\hspace*{0.5cm} $st.(P.j) = \stable  \Rightarrow (st.j=\stable \wedge sn.j=sn.(P.j)) \ \wedge$\\
\hspace*{1.7cm} $(st.(P.j) = \bot  \wedge st.j = \bot) \Rightarrow  sn.j=sn.(P.j) \ \wedge$\\
\hspace*{1.8cm} $st.(P.j) = \top  \Rightarrow (st.j=\top \wedge sn.j=sn.(P.j)))$


Action $\name10$ guarantees the self-stabilization of the protocol by ensuring that no matter what the initials state is, the program can recover to legitimate states from where future \restoration operations work correctly. Finally, if any tree construction algorithm is called to reconfigure the tree and affects process $j$, action $\name11$ resets the $res$ variable of process $j$.


\subsection{Fault Types}
\label{sec:faultar}

We assume that the processes are in the presence of (a) fail-stop faults and (b) transient faults. If a process fail-stops, it cannot communicate with the other processes and a tree correction algorithm needs to reconfigure the tree. Transient faults can perturb all variables (e.g., $sn$, $st$, etc.) except $otsn$ and $ctsn$ values. The corruption of the $otsn$ and $ctsn$ values is tolerated in Section \ref{sec:bound}, where we bound these values. Moreover, we assume that all the faults stop occurring after some time. We use $f$ to denote these faults in the rest of this paper.



\subsection{Proof of Correctness for \AR Protocol}
\label{sec:proofar}

To show the correctness of our protocol, we define the predicates $T$ and $AS$, restoration state predicate $S2$, and invariant $S1$ in the following and use them for subsequent discussions.


\noindent $S1: \forall j,k: ((st.j=\stable \vee st.j=\restore) \wedge \Gd.j \ \wedge$\\
\hspace*{2.5cm} $(P.j$ {\em forms the tree}$) \wedge \ (otsn.j=ctsn.k) \ \wedge$\\
\hspace*{4cm} $ l.j =$ \{{\em \leader of the parent tree}\}$)$

\noindent $T: \forall j,k: otsn.j=ctsn.k \wedge (st.j=\restore \vee st.j=\stable)$

\noindent $AS: max(otsn.j) \geq max(ctsn.j)$

\noindent $S2: \forall j,k: ((st.j = \top \vee st.j = \bot) \ \wedge$\\
\hspace*{2.5cm} $(P.j=j \wedge st.j=\top \Rightarrow otsn.k \geq otsn.j))$


In $S1$ the state of all processes is either $\stable$ or $\restore$ and all $otsn$ and $ctsn$ values are equal. If some faults occur but no \aus, the system goes to $T$ where all $otsn$ and $ctsn$ values are still equal. In this case, the state of the processes cannot be perturbed to $\bot$ or $\top$. If \aus occur, the system goes to $AS$ where the $otsn$ and $ctsn$ values are changed and the $max(otsn)$ is always greater than and equal to $max(ctsn)$. When all the states are changed to $\bot$ or $\top$, all processes are aware of the \aus and the system is in $S2$. Also, the constraint $otsn.k \geq otsn.j$ in the definition of $S2$ means that when the state of the \leader is $\top$, the rest of the processes are aware of the \aus that has caused the \leader to change its $otsn$.

\begin{theorem} \label{thm:t-s1}
Upon starting at an arbitrary state in $T$, in the absence of faults and \aus, the system is guaranteed to converge to a state in $S1$.
\end{theorem}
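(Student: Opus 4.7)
Closure of $T$ under the hypothesis (no faults, no \aus) is immediate. The guards of $\name1$--$\name7$ each demand either an \au detection, $otsn.j \neq ctsn.j$, $ctsn.j < ctsn.k$, or a process in state $\bot$ or $\top$, none of which holds in $T$. Only $\name8$, $\name9$, $\name10$ and $\name11$ can execute, and none of them alters $otsn$ or $ctsn$ or introduces a $\bot$/$\top$ state; the $T$-invariants $otsn.j = ctsn.k$ and $st.j \in \{\restore, \stable\}$ are therefore preserved throughout the whole computation.

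The plan is to argue convergence in two phases. First, the underlying silent tree protocol stabilizes $P.j$ and $l.j$ in a bounded number of steps; during this phase $\name11$ may fire to reset $res.j$ to $0$, but it does not disturb $st$, $sn$, $otsn$, or $ctsn$. After the tree is frozen, $\name11$ no longer fires and $l.j$ correctly identifies a unique fixed \leader, call it $j_0$, at every $j$.

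Second, I split on $st.j_0$. If $st.j_0 = \stable$, the pair $(st.j_0, sn.j_0)$ is permanent, because $\name7$ requires $st.j_0 = \top$, $\name9$ requires $st.j_0 = \restore$, and $\name10$ is disabled at the root since $\Gd.j_0$ trivially holds when $P.j_0 = j_0$. Repeated firings of $\name10$ then propagate $(\stable, sn.j_0)$ through the tree level by level in at most $D$ steps (with $D$ the tree depth), repairing every $\Gd$-violation, after which all conjuncts of $S1$ hold. If instead $st.j_0 = \restore$, I argue that within a bounded number of wave iterations the system is driven into the previous case. Actions $\name8$ and $\name10$ propagate the leader's $(\restore, sn.j_0)$ downward (with $\name8$ additionally bumping each touched $res$ to $1$); $\name9$ then propagates $\stable$ back up, vacuously enabled at leaves and inductively enabled at internal nodes. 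When $\name9$ eventually fires at $j_0$, either $res.j_0 = 1$ and restoration completes, or the restart clause of $\name9$ relaunches the wave with $sn.j_0{+}1$. Because the tree is now stable and $\name11$ cannot reset $res$, after a bounded number of restarts the leader's $sn$ exceeds every $sn.j$ previously held anywhere in the system, at which point $\name8$ fires uniformly along the tree, every $res$ is forced to $1$, the upward neighbor-$\min$ sweep of $\name9$ preserves $res.j_0 = 1$, and the wave completes, leaving $j_0$ permanently $\stable$ and reducing us to the first case.

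The principal obstacle is exactly this termination argument in the $\restore$-leader subcase: one must combine fair scheduling with careful tracking of how $res$ evolves under the increment-and-cap update of $\name8$ and the neighbor-minimum update of $\name9$, exploiting the unboundedness of $sn$ in the present section to guarantee that the \leader can always produce a sequence number strictly greater than any value currently held. Once that termination is established, the $\restore$-leader case folds into the $\stable$-leader case, the propagation argument delivers matching $(st, sn)$ throughout the tree, and together with the closure properties carried over from $T$ and the tree-protocol guarantees on $P.j$ and $l.j$, every clause of $S1$ is satisfied.
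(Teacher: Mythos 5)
Your proposal is correct and follows the same route as the paper's own argument: first observe that in $T$ with no faults and no auditable events none of $\name1$--$\name7$ is enabled (so $otsn$, $ctsn$ and the $\restore/\stable$ discipline are preserved), then let $\name8$--$\name11$ drive the system into $S1$. In fact you supply considerably more than the paper does --- its proof simply asserts that ``executing actions $\name8$, $\name9$, and $\name10$ converges the system to $S1$,'' whereas you actually justify that claim via the tree-stabilization phase, the case split on the leader's state, and the freshness-of-$sn$ termination argument for the restoration wave.
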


\begin{proof}
Since there is no \au in the system, action $\name1$ cannot execute and the $otsn$ and $ctsn$ values do not change. As a result, the system remains in $T$ and executing actions $\name8$, $\name9$, and $\name10$ converges the system to $S1$.
\end{proof}


Let $f$ be the faults identified in Section \ref{sec:faultar}. Then, we have:
\begin{theorem} \label{thm:closure-t}
$T$ is closed in actions $(\name2$--$\name11)$ $\cup$ $f$.
\end{theorem}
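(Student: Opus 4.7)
The plan is a simple case analysis over the eleven actions and the fault set, exploiting the fact that the invariant $T$ is very restrictive: in any state of $T$ all $otsn$ and $ctsn$ values are pairwise equal, and every process is in state $\restore$ or $\stable$. The reason $\name1$ is deliberately excluded is that it is the only action that increments $otsn.j$, which would immediately break the equality clause of $T$; every other action either leaves $otsn,ctsn$ untouched or is guarded in a way that $T$ already falsifies.

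First I would dispatch actions $\name2$--$\name7$ by showing their guards are false in every state of $T$. Specifically: $\name2$ requires $otsn.j<otsn.k$ and $\name6$ requires $ctsn.j<ctsn.k$, both contradicted by the equality clause; $\name3$ requires $otsn.j>ctsn.j$, again contradicted (note that $otsn.j = ctsn.j$ follows from $T$ with $k=j$); $\name4$ and $\name5$ need $st.(P.j)=\bot$ or $st.j=\bot$ respectively; and $\name7$ needs $st.j=\top$. All are ruled out because $T$ forces $st.j\in\{\restore,\stable\}$ everywhere. Hence none of these actions can fire from a $T$-state, so they trivially preserve $T$.

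Next I would handle the actions that can actually execute in $T$, namely $\name8$, $\name9$, $\name10$, $\name11$. Inspection of their statements shows that none of them writes to $otsn$ or $ctsn$, so the equality clause is untouched. For the state clause it suffices to observe: $\name8$ assigns $st.j:=\restore$; $\name9$ assigns $st.j:=\stable$, or in the $P.j=j \wedge res.j\neq 1$ branch assigns $st.j:=\restore$; $\name10$ copies $st.(P.j)$ into $st.j$, which by the inductive hypothesis is in $\{\restore,\stable\}$; and $\name11$ only rewrites $res.j$. In every case $st.j$ remains in $\{\restore,\stable\}$, so $T$ is preserved.

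Finally, the closure under $f$ follows from the fault model of Section~\ref{sec:faultar}: transient faults are explicitly prohibited from corrupting $otsn$ and $ctsn$, so the equality clause of $T$ cannot be violated by $f$, and fail-stop faults only remove processes (with tree repair deferred to $\name11$), which again leaves $otsn$, $ctsn$, and the $st$ values of surviving processes unchanged. The one subtle point, which I expect to be the main obstacle worth flagging, is the $st$-component under transient corruption: the statement of the theorem must be read as restricting $f$ to transients that leave the $st$ variables in $\{\restore,\stable\}$ (otherwise closure would obviously fail), and I would make this restriction explicit at the start of the proof, so that the case analysis above genuinely covers all of $(\name2\text{--}\name11)\cup f$.
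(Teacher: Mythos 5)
You have given essentially the paper's proof, just carried out in full: exclude $\name1$, observe that neither the remaining actions nor the faults touch $otsn$/$ctsn$, so the equality clause of $T$ falsifies the guards of $\name2$--$\name7$, while $\name8$--$\name11$ leave $otsn$/$ctsn$ alone and keep every $st.j$ in $\{\restore, \stable\}$. The one substantive difference is your explicit caveat about transient faults corrupting $st.j$ to $\bot$ or $\top$: the paper's proof simply asserts that ``faults cannot perturb the system to a state outside of $T$,'' which does not follow from the fault model of Section~\ref{sec:faultar} (transients may perturb $st$ and $sn$), so your decision to make the needed restriction on $f$ explicit is not a flaw in your argument but a repair of a gap the paper leaves unaddressed.
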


\begin{proof}
We assume that faults cannot perturb the $otsn$ and $ctsn$ values. Additionally, action $\name1$ does not execute to change the $otsn$ value. This guarantees that actions $\name2$--$\name6$ cannot execute to change the $otsn$ or $ctsn$ values. Hence, when a system is in $T$, in the absence of \aus, it remains in $T$. Moreover, faults cannot perturb the system to a state outside of $T$, thereby closure of $T$.
\end{proof}


\begin{corollary}
Upon starting at an arbitrary state in $T$, in the presence of faults but in the absence of \aus, the system is guaranteed to converge to a state in $S1$.
\end{corollary}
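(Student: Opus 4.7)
The plan is to derive this corollary as an immediate consequence of the two preceding results, together with the fault-model assumption from Section~\ref{sec:faultar} that all faults stop occurring after some finite time. First I would fix an arbitrary computation $\sigma = \langle s_0, s_1, s_2, \ldots \rangle$ satisfying $s_0 \in T$, where each step $(s_m, s_{m+1})$ lies in $\delta_p \cup f$ and no transition in $\sigma$ belongs to $ae$. The goal is to exhibit an index $\ell$ such that $s_\ell \in S1$.

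Next I would invoke Theorem~\ref{thm:closure-t} to conclude that $T$ is closed under $(\name2$--$\name11) \cup f$. A straightforward induction on $m$ then shows that $s_m \in T$ for every $m \geq 0$: the base case is $s_0 \in T$ by hypothesis, and the induction step follows because every transition of $\sigma$ lies in $\delta_p \cup f$, both of which preserve $T$ (note that $\name1$ cannot execute because no auditable event is detected, so only actions $\name2$--$\name11$ contribute to $\delta_p$ along $\sigma$). Then, by the fault-model assumption, there is a finite index $N$ such that $(s_m, s_{m+1}) \in \delta_p$ for all $m \geq N$. At that point $s_N \in T$, and the suffix $\langle s_N, s_{N+1}, \ldots \rangle$ is a fault-free, $ae$-free computation of $p$ starting in $T$.

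Finally, I would apply Theorem~\ref{thm:t-s1} to this suffix to obtain some $\ell \geq N$ with $s_\ell \in S1$, which gives the desired convergence. The argument is essentially a glue step, so no new machinery is needed; the only subtlety worth flagging explicitly is that closure of $T$ must hold under the union $\delta_p \cup f$ (not just separately), which is exactly what Theorem~\ref{thm:closure-t} provides. I expect the only mild obstacle to be making precise the interleaving of program and fault transitions during the fault-active prefix, but this is handled cleanly by the induction above, and the assumption that faults are transient ensures such an index $N$ exists.
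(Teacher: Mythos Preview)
Your proposal is correct and matches the paper's intent: the paper states this result as an unproved corollary immediately following Theorems~\ref{thm:t-s1} and~\ref{thm:closure-t}, so the intended argument is precisely the glue step you describe---use closure of $T$ under $\delta_p \cup f$ to stay in $T$ throughout the fault-active prefix, invoke the Section~\ref{sec:faultar} assumption that faults eventually stop, and then apply Theorem~\ref{thm:t-s1} to the fault-free suffix. Your write-up actually supplies more detail than the paper does, and the one subtlety you flag (closure under the union, and $\name1$ being disabled in the absence of auditable events) is exactly what makes the induction go through.
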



\begin{lemma} \label{thm:notify}
Starting from a state in $T$ where the $otsn$ value of all processes equal $x$, if at least one \au occurs, the system reaches a state where $\forall j: otsn.j \geq x+1$.
\end{lemma}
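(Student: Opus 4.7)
The plan is to trace the propagation of the bumped $otsn$ value outward from the process that first detected the auditable event. First I would observe that since the starting state lies in $T$, we have $otsn.j = x$ for every process $j$ (and also $ctsn.j = x$), so the only way any $otsn$ value can change is via actions $\name1$ or $\name2$. When the auditable event occurs, some process $i$ executes $\name1$ and sets $otsn.i := x+1$. This establishes the base case: there exists at least one process $i$ with $otsn.i \geq x+1$.

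Next I would argue by induction on graph distance (in the underlying communication graph on which $\name2$ operates via the neighbor relation $Nbr$) from $i$ that every process eventually has $otsn \geq x+1$. The inductive step uses the action $\name2$: if $k$ is a neighbor of some process $j'$ with $otsn.j' \geq x+1$ and $otsn.k$ is still smaller, then $\name2$ is enabled at $k$ and raises $otsn.k$ to at least that of $j'$. A second ingredient I would emphasize is monotonicity: no action in the protocol ever decreases $otsn$, and by the closure argument of Theorem \ref{thm:closure-t} the tolerated faults do not decrement $otsn$ either. Hence once $otsn.k$ reaches $x+1$, it stays at least $x+1$ forever.

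To turn the inductive step into a finite-time argument, I would appeal to the two standing assumptions stated just before the lemma: (i) $\name2$ is a high-priority action executed concurrently with every other action, so it cannot be starved; and (ii) the underlying network is connected (implicit in the existence of a spanning tree rooted at the leader). Together with monotonicity, this ensures that in a finite number of steps the $\name2$-wave reaches every process, giving $\forall j: otsn.j \geq x+1$.

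The main obstacle I expect is making the ``high-priority, executes concurrently'' informal statement precise enough to yield a clean finite-time convergence argument in the presence of the other actions and of the faults $f$. In the write-up I would handle this by noting that, because $T$ is closed under $\delta_p \cup f$ and no action or fault decreases $otsn$, the potential function $\Phi = \sum_j \max(0, (x+1) - otsn.j)$ is non-increasing and strictly decreases each time $\name2$ fires at a process whose $otsn$ is still below $x+1$ but which has a neighbor at $x+1$. Fairness of $\name2$ then forces $\Phi$ to reach $0$, which is exactly the desired conclusion.
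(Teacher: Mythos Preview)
Your proposal is correct and follows essentially the same line as the paper's own proof: the paper simply observes that action $\name1$ bumps $otsn$ at the detecting process to $x+1$ and that action $\name2$ then propagates this value to every other process, with further auditable events only pushing $otsn$ values higher. Your write-up is a more careful version of the same argument---the induction on graph distance, the monotonicity of $otsn$, and the potential function $\Phi$ just make explicit what the paper leaves implicit. One small remark: the fact that faults do not decrement $otsn$ is stated directly in the fault-model assumptions (Section~\ref{sec:faultar}) rather than being a consequence of Theorem~\ref{thm:closure-t}, so you should cite that instead.
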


\begin{proof}
This lemma implies that, if there is an \au in the system and at least one process detects it and increments its $otsn$ value, eventually all processes will be aware of that \au.

When the system is in $T$, all the $otsn$ values are equal to $x$. After detecting an \au, a process increments its $otsn$ value by executing action $\name1$ to $x+1$. Consequently, using action $\name2$, every process gets notified of the \au and increments its $otsn$ value. If some other process detects more \aus in the system, it increments its $otsn$ value and notifies the other processes of those \aus. Hence, the $otsn$ value of all processes is at least $x+1$.
\end{proof}


\begin{theorem} \label{thm:max}
Starting from any state in $AS-T$ where $max(otsn.j) > max(ctsn.j)$ and the \aus stop occurring, the system is guaranteed to reach a state in $S2$.
\end{theorem}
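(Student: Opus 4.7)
The plan is to track the combined effect of the propagation action $\name2$, the leader's initiation action $\name3$, and the downward/upward wave actions $\name4$ and $\name5$, together with the stabilizing correction action $\name10$. Let $M = max_j\, otsn.j$ at the starting state. Since auditable events have stopped, action $\name1$ is permanently disabled, so no new $otsn$ value exceeding $M$ will ever be generated. The first step is to argue that action $\name2$ is continually enabled whenever some process has $otsn$ strictly less than a neighbor's $otsn$, and since $otsn$ is monotonically non-decreasing (actions $\name2, \name4, \name5$ never lower it, and $\name1$ does not fire), every process eventually satisfies $otsn.j = M$. This step parallels Lemma~\ref{thm:notify}.

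Once all $otsn.j$ equal $M$, I would examine the leader $\ell$ (the process with $P.\ell = \ell$). By assumption $M = max(otsn.j) > max(ctsn.j) \geq ctsn.\ell$, so the guard of action $\name3$ becomes satisfied at $\ell$ unless $st.\ell = \bot$ already. Thus in finitely many steps $st.\ell = \bot$ with $sn.\ell$ freshly incremented to a value unmatched by any child. From there, action $\name4$ walks $\bot$ and the new $sn$ down the tree; any child whose state or sequence number is still inconsistent with its parent (including the spurious $\top$, $\restore$, $\stable$ configurations permitted by $AS - T$) is forced into line by $\name4$ or, failing its guard, by the catch-all correction action $\name10$, which copies $(st, sn)$ from the parent whenever $\Gd.j$ is violated. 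Since the tree has finite depth and faults have ceased, a straightforward induction on depth gives that every process reaches $st.j = \bot$ with $sn.j$ equal to the new value issued by $\ell$.

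With the entire tree in state $\bot$ and with matching sequence numbers, the guard of action $\name5$ becomes enabled at every leaf (its vacuously empty child-quantifier holds, its neighbor sequence-number conditions hold after propagation, and $st$ equals $\bot$). Each leaf flips to $\top$, and a bottom-up induction, again on tree depth, shows that every internal process in turn satisfies the children-are-$\top$ precondition of $\name5$ and so sets $st := \top$. Crucially, throughout this second wave, no process leaves the set $\{\bot, \top\}$: action $\name5$ either produces $\top$, or at the root with $res.\ell \neq 1$ flips the root back to $\bot$ and restarts the wave. Either way the first conjunct of $S2$, namely $\forall j: st.j \in \{\bot, \top\}$, is preserved once it first becomes true.

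The last thing to verify is the second conjunct of $S2$: if the root ever sits with $st.\ell = \top$, then $otsn.k \geq otsn.\ell$ for every $k$. This is immediate from the first step, since by the time $\ell$ evaluates $\name5$ every process has $otsn.k = M = otsn.\ell$, and subsequent executions of $\name2$ can only raise other $otsn$ values. The main obstacle in the argument is the interaction between the arbitrary starting configuration in $AS - T$ and the $res$-bookkeeping: a pathological initial $res$ assignment can force the root to cycle via the ``if $res.j \neq 1$ then restart'' branch of $\name5$. Here I would appeal to the fact that each full $\bot$-to-$\top$ traversal in which any process had $res = 0$ ends with the root re-issuing a fresh $\bot$ wave, and action $\name11$ plus action $\name3$ drive the $res$ values to $1$ along a monotone progress measure (depth-weighted count of processes whose current $res$ equals $1$), so only finitely many restarts occur before the system enters and stays within $S2$.
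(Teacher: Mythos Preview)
Your proposal is correct and follows essentially the same skeleton as the paper's proof: propagate $otsn$ to its maximum via $\name2$, have the leader fire $\name3$ to launch a $\bot$ wave, push $\bot$ down via $\name4$, and pull $\top$ up via $\name5$. Your write-up is in fact more careful than the paper's on two points the paper leaves implicit---the role of the catch-all correction action $\name10$ in cleaning up arbitrary initial $(st,sn)$ values inside $AS-T$, and the termination of the $res$-driven restart loop in $\name5$.

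One difference in emphasis worth noting: the paper's proof explicitly walks through what happens if a process (or the leader itself) fails \emph{during} the $\bot/\top$ wave---tree reconfiguration triggers $\name11$, which zeroes $res$, which forces the (possibly new) leader to relaunch the wave via the $res\neq 1$ branch of $\name5$ or via $\name3$. You instead invoke ``faults have ceased'' at the point where you run the depth induction. That is ultimately fine because the ambient fault model assumes faults stop after finite time, so your argument applies from that point onward; but the theorem hypothesis only says \emph{auditable events} stop, so if you want to match the paper's scope you should either make the ``faults eventually stop'' appeal explicit or fold the failure-during-wave case into your restart argument (your final paragraph already has the right progress measure for this).
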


\begin{proof}
When there is a process whose $otsn$ value is greater than all the $ctsn$ values in the system, the state of the system is one of the following:

\begin{itemize}
\item there is at least one process that is not aware of all the \aus occurred in the system, and
\item all processes are aware of all \aus occurred but their states are different.
\end{itemize}

The first case illustrates that all $otsn$ values are not equal. Consequently, action $\name2$ executes and makes all $otsn$ values equal. When the \leader gets notified of the \aus, it initializes a $\bot$ wave by executing action $\name3$. This wave propagates towards the leaves by executing action $\name4$. When a leaf receives the $\bot$ wave, it change its state to $\top$ by executing action $\name5$ and propagates the $\top$ wave towards the \leader. Note that, action $\name2$ executes concurrently with the other actions. Thus, all $otsn$ values will eventually become equal.
The second case shows that all $otsn$ values are equal but the state of the \leader is not changed to $\top$ yet. Therefore, when all $otsn$ values are equal and the \aus stop occurring, the \leader executes action $\name5$ and changes its state to $\top$, thereby reaching $S2$.
Consider that, if a process fails and causes some changes in the configuration of the tree, the $res$ variable of its neighbors would be reset to false and the \leader will eventually get notified of this failure by executing action $\name5$. Thus, the \leader initializes a new $\bot$ by executing action $\name5$.
Moreover, if the \leader fails when its state is $\bot$ and another process, say $j$, whose state is $\top$ becomes the new \leader, the guard of action $\name2$ becomes true since $otsn.j>ctsn.j$. In this case, the new \leader initializes a new $\bot$ wave to ensures that when the state of the \leader is $\top$, the state of all the other processes in the system is also $\top$.
\end{proof}


\begin{theorem} \label{thm:t-s2}
Starting from a state in $T$ and the occurrence of at least one \au, the system is guaranteed to reach $S2$ even if \aus do not stop occurring.
\end{theorem}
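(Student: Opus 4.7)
The plan is to lift Lemma~\ref{thm:notify} from ``every process eventually raises its $otsn$'' to ``a full $\bot/\top$ wave sweeps the tree,'' and then to exhibit the snapshot at which $S2$ holds.

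First I would apply Lemma~\ref{thm:notify}. Since the computation starts in $T$ with all $otsn$ values equal to some $x$ and at least one \au fires action~$\name1$, the lemma guarantees that eventually $otsn.j \geq x+1$ for every process $j$. In $T$ we also have $ctsn.j = x$ for every $j$ and $st.j \in \{\stable, \restore\}$, so the \leader reaches a state with $otsn > ctsn$ and $st \neq \bot$, enabling action~$\name3$. Executing $\name3$ moves the \leader to $\bot$ and increments its $sn$. While the \leader sits in $\bot$, the guard of $\name3$ is false, so any further \aus can no longer interrupt the current wave through $\name3$; they merely keep raising $otsn$ in the background via $\name1$ and $\name2$.

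Next I would trace the $\bot$-wave down and the $\top$-wave back up. As long as some non-\leader $j$ has $st.(P.j) = \bot$, $st.j \neq \bot$, and $sn.j \neq sn.(P.j)$, action~$\name4$ is enabled, so $\bot$ reaches every descendant by a standard tree-propagation argument. Once a process has entered $\bot$, it cannot leave $\{\bot, \top\}$ during the wave: $\name5$ moves it only to $\top$, and $\name10$ can only copy the parent's state, which itself stays in $\{\bot, \top\}$. Leaves execute $\name5$ (whose child-universal conjunct is vacuous for leaves), and $\top$ climbs back up through inner nodes via $\name5$. Now consider the configuration immediately before the \leader itself executes $\name5$: every non-\leader is in $\top$ while the \leader is still in $\bot$. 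In that snapshot the first conjunct of $S2$ (every state in $\{\bot, \top\}$) holds, and the second conjunct holds vacuously because the \leader's state is $\bot$ rather than $\top$. Hence the computation visits a state in $S2$.

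The main obstacle I foresee is arguing that the $\bot/\top$ wave actually completes in the face of never-ending \aus. The resolution is that \aus touch only $otsn$ and leave $st$, $sn$, and $res$ unchanged, while the $\name5$ guard only requires instantaneous equality of $otsn$ between a node and its children. Because $\name2$ propagates the maximum $otsn$ and is declared to run concurrently with every other action, the fairness implicit in Definition~\ref{def:comp} ensures that this equality is attained infinitely often, giving $\name5$ the chances it needs to fire at every level. A secondary concern is a \leader failure part-way through the wave; as in the proof of Theorem~\ref{thm:max}, any replacement \leader still sees $otsn > ctsn$ and thereby re-enables $\name3$, after which the same wave-completion argument applies to the reconfigured tree.
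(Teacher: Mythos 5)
Your proposal is correct in substance and reaches the same destination by the same underlying mechanism (Lemma~\ref{thm:notify} followed by a $\bot$/$\top$ wave), but it differs from the paper's proof in two ways that are worth recording. First, the paper discharges the wave argument by citing Theorem~\ref{thm:max}, whose hypothesis is that the \aus \emph{stop} occurring --- a mismatch with the present theorem, which explicitly allows them to continue; the paper then simply asserts at the end that continuing \aus do not move the system out of $S2$. You instead inline the wave propagation ($\name3$ at the \leader, $\name4$ downward, $\name5$ upward) and confront the unceasing-\aus issue head on, observing that action $\name1$ touches only $otsn$ and that the high-priority action $\name2$ restores the $otsn$-equality needed by the guard of $\name5$; this is the more honest treatment, though your appeal to ``fairness implicit in Definition~\ref{def:comp}'' is not actually backed by the paper's computation model (Definition~\ref{def:comp} imposes no fairness), so this step remains as informal as the paper's. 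Second, you witness $S2$ at the configuration just before the \leader executes $\name5$, where the \leader is still $\bot$ and the second conjunct of $S2$ holds vacuously, whereas the paper identifies entry into $S2$ with the \leader reaching $\top$; both readings are consistent with the definition of $S2$ as written, and yours requires one fewer step of the protocol. Your closure observation (that once a process enters $\{\bot,\top\}$ only $\name5$ and $\name10$ can act on its state, neither of which exits that set while the parent remains in it) is a detail the paper omits and strengthens the argument.
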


\begin{proof}
occurring, at least, one \au, some process, say $j$, detects it and increments its $otsn$ by executing action $\name1$. Following Lemma \ref{thm:notify}, all processes will eventually get notified of the \au. Hence, all $otsn$ values will be equal and, following Theorem \ref{thm:max}, the system will reach a state where the state of all processes is $\top$ or $\bot$. Also if the state of the \leader is $\top$, we can ensure that all processes have been notified of the \au, thereby reaching a state in $S2$. Consider that, even if the \aus continue occurring, the state of processes does not change and the system remains in $S2$.
\end{proof}


\begin{theorem} \label{thm:s1-s1}
Starting from a state in $T$ and in the presence of faults and \aus, the system is guaranteed to converge to $S1$ provided that faults and \aus stop occurring.
\end{theorem}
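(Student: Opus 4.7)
The plan is to split on whether any auditable event ever fires. If none does, then action $\name1$ is permanently disabled, so Theorem \ref{thm:closure-t} shows the system remains in $T$ throughout, and the corollary immediately following that theorem (applied once $f$ stops) yields convergence to $S1$.

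The substantive case is when at least one auditable event occurs. I would first invoke Theorem \ref{thm:t-s2} to conclude the system reaches $S2$. Because only finitely many auditable events occur, the $otsn$ values stabilize at some common maximum $y$; inside $S2$ every process therefore has $otsn.j = y$. Action $\name5$ executed by the \leader (which by that time has $st.j = \top$ and $res.j = 1$) assigns $ctsn.j := otsn.j = y$, and action $\name6$ propagates this value so that $otsn.j = ctsn.k = y$ system-wide. At this point the guard of action $\name7$ is satisfied, and upon receiving the authorization to restore the \leader initiates a $\restore$ wave: action $\name8$ carries the wave down the tree, action $\name9$ collects $\stable$ responses back, and action $\name10$ together with $\Gd$ cleans up residual $sn$/$st$ inconsistencies inherited from the arbitrary starting state in $T$. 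Once the \leader observes every child in $\stable$ with $res = 1$, every process has $st.j \in \{\stable, \restore\}$, $\Gd.j$ holds, the tree variables agree with the underlying (stabilized) tree protocol, and all $otsn$ and $ctsn$ values are equal; this is exactly $S1$.

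The main obstacle I expect is handling arbitrary interleavings of $f$ and $ae$ before they stop. A fault may reconfigure the tree and, via action $\name11$, reset some $res$ variables mid-wave; a fresh auditable event may re-enable action $\name3$ and restart a $\bot$ wave even while a $\restore$ wave is in progress. I would have to argue that the wave-completion mechanism built around $res$ (actions $\name5$ and $\name9$) simply restarts in each such case, and that since there are only finitely many perturbations, there is an interference-free suffix in which the argument of the previous paragraph applies unchanged. A secondary subtlety is that transient faults may have corrupted $st$, $sn$, $l$, or $P$ arbitrarily within $T$; here action $\name10$, supported by the underlying self-stabilizing tree protocol that silently repairs $P.j$ and $l.j$, provides the cleanup needed for the final $\restore$/$\stable$ wave to complete and land the system in $S1$.
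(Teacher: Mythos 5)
Your proposal follows the same route as the paper's proof: invoke Theorem~\ref{thm:t-s2} to reach $S2$, observe that action $\name5$ sets the \leader's $ctsn$ equal to its $otsn$ and action $\name6$ propagates it, then let actions $\name7$--$\name9$ carry out the \restoration wave so that the system re-enters $T$ and converges to $S1$ (the paper closes this last leg by citing Theorems~\ref{thm:t-s1} and~\ref{thm:closure-t}, which you instead re-derive inline via $\name10$ and $\Gd$). Your added case split for the event-free run and your explicit remark that finitely many perturbations leave an interference-free suffix are harmless elaborations of the same argument, so the proposal is correct and essentially matches the paper.
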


\begin{proof}
Following Theorem \ref{thm:t-s2}, if faults and \aus occur, the system reaches a state in $S2$ where all $otsn$ values are equal and the \leader process is aware that all processes have been notified of the \aus. In this situation, the $ctsn$ value of the \leader is equal to its $otsn$ value (by executing the statement $ctsn.j:=otsn.j$ in action $\name5$). Consequently, the \leader can start a \restoration wave by executing action$\name7$. Moreover, the other processes can concurrently execute action $\name6$, increase their $ctsn$ value, and propagate the \restoration wave by executing action $\name8$. When all $ctsn$ values are equal, the system is in $T$ and following Theorems \ref{thm:t-s1} and \ref{thm:closure-t}, the system converges to $S1$.
\end{proof}


\begin{observation}
The system is guaranteed to recover to $S2$ in the presence of faults and \aus. As long as \aus continue occurring, the system remains in $S2$ showing that all processes are aware of the \aus. When the \aus stop occurring, the system converges to $S1$, where the system continues its normal execution.
\end{observation}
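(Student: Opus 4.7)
The plan is to obtain this observation as a corollary of the theorems already established in the section, since the three clauses of the observation correspond almost directly to Theorem \ref{thm:t-s2}, a closure-style argument for $S2$, and Theorem \ref{thm:s1-s1}. First, to justify the first clause, I would start from an arbitrary computation in which faults and \aus may occur. By assumption on $T$ being reached via self-stabilization (actions $\name{10}$ and $\name{11}$ together with the tree-correction protocol), the system eventually enters $T$; at that point Theorem \ref{thm:t-s2} directly yields that the system reaches $S2$, regardless of whether \aus continue to fire.

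Next, for the second clause, the key sub-goal is to argue that $S2$ is closed under $\delta_p \cup f \cup ae$ as long as at least one \au remains. I would proceed by examining each action of the protocol and each fault class, and noting that in a state of $S2$ every process has $st.j \in \{\bot, \top\}$, so the only enabled actions are $\name1$ (which only increments $otsn$), $\name2$ (which propagates $otsn$ uniformly), and possibly $\name3$--$\name5$, which keep states within $\{\bot,\top\}$ and only advance the $\bot$/$\top$ wave. A new \au executed via $\name1$ at any process simply raises some $otsn.k$, which by Lemma \ref{thm:notify} is then propagated so that the constraint $otsn.k \geq otsn.j$ in the definition of $S2$ is preserved; and because $\name7$ requires $ctsn.j = otsn.j$ at the leader, no \restoration wave can start while a fresh \au keeps the leader's $otsn$ strictly above its $ctsn$. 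Together these observations show that the computation stays in $S2$ for as long as \aus keep occurring.

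Finally, for the third clause, once \aus cease, the leader eventually satisfies the guard of $\name5$ with $res.j=1$, assigns $ctsn.j := otsn.j$, and then $\name7$ fires. From that moment on the analysis is exactly that of Theorem \ref{thm:s1-s1}: the \restoration wave propagates through $\name6$ and $\name8$, all $ctsn$ values are equalized, and the system re-enters $T$. Applying Theorems \ref{thm:t-s1} and \ref{thm:closure-t} then yields convergence to $S1$.

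The main obstacle I anticipate is the closure argument in the second clause. Unlike the other two parts, which reuse prior theorems verbatim, closure of $S2$ under continued \aus is not stated explicitly anywhere earlier, and the tricky case is when an \au strikes while the leader is still propagating a $\bot$/$\top$ wave or just after it has set $st.j := \top$: one must check that $\name3$'s re-arming (which resets $st$ back to $\bot$ and increments $sn$) keeps the system in $S2$ rather than slipping into a state with mixed $\stable$/$\restore$ processes. The argument hinges on $\name7$'s guard $ctsn.j = otsn.j$ being violated whenever an unacknowledged \au exists, which prevents any transition into $\restore$ and thus confines the computation to the $\{\bot,\top\}$-region, i.e., to $S2$.
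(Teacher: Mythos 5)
Your proposal is correct and follows essentially the same route as the paper, which treats this observation as a direct corollary of Theorem~\ref{thm:t-s2} (reaching $S2$ even when auditable events keep occurring) and Theorem~\ref{thm:s1-s1} (convergence to $S1$ once they stop), with the ``remains in $S2$'' clause asserted only in the closing sentence of the proof of Theorem~\ref{thm:t-s2}. Your explicit closure sub-goal for $S2$ under continued auditable events --- resting on the guard $ctsn.j=otsn.j$ of $\name7$ and the assumed high-priority propagation of $otsn$ by $\name2$ --- is a more careful articulation of exactly the step the paper leaves implicit, so the two arguments coincide in substance.
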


\section{Bounding \AR Variables}
\label{sec:bound}

In this section, we show how the $otsn$, $ctsn$, and $sn$ values can be bounded while preserving stabilization property.


\subsection{Bounding $otsn$}
In the protocol in Figure \ref{fig:ar}, $otsn$ values continue to increase in an unbounded fashion as the number of auditable events increase. In that protocol, if $otsn$ values are bounded and eventually they are restored to $0$, this may cause the system to lose some auditable events. Furthermore, if $otsn.j$ is reset to $0$ but it has a neighbor $k$ where $otsn.k$ is non-zero, it would cause $otsn.j$ to increase again.

Before we present our approach, we observe that if the auditable events are too frequent, restoring the system to legitimate states may never occur. This is due to the fact that if the \leader process attempts to restore the system to legitimate states then that action would be {\em canceled} by new auditable events. Hence, restoring the system to legitimate states can occur only after auditable events stop. However, if auditable events occur too frequently for some duration, we want to ensure that the $otsn$ values still remain bounded.

Our approach is as follows: We change the domain of $otsn$ to be $N^2+1$, where $N$ is the number of processes in the system. Furthermore, we change action $\name1$ such that process $j$ ignores the detectable events if its neighbors are not aware of the recent auditable events that it had detected. Essentially, in this case, $j$ is {\em consolidating} the auditable events. Furthermore, we change action $\name2$ by which process $j$ detects that process $k$ has detected a new auditable event. In particular, process $j$ concludes that process $k$ has identified a new auditable event if $otsn.k$ is in the range $[otsn.j\oplus 1 \cdots otsn.j\oplus N]$, where $\oplus$ is modulo $N^2+1$ addition. Moreover, before $j$ acts on this new auditable event, it checks that its other neighbors have caught up with $j$, i.e., their $otsn$ value is in the range $[otsn.j \cdots otsn.j\oplus N]$. Finally, we add another action where $otsn.j$ and $otsn.k$ are far apart, i.e., $otsn.j$ is not in the range $[otsn.k \ominus N \cdots otsn.k \oplus N]$. Thus, the revised and new actions are $\bname1$, $\bname2$, and $\bname12$ in Figure \ref{fig:bounded-ar}.

We have utilized these specific actions so that we can benefit from previous work on asynchronous unison \cite{cfg} to bound the $otsn$ values. Although the protocol in \cite{cfg} is designed for clock synchronization, we can utilize it to bound the $otsn$ values. In particular, the above actions are same (except for the detection of new auditable events) as that of \cite{cfg}. Hence, based on the results from \cite{cfg}, we can observe that if some process continues to detect auditable events forever, eventually, the system would converge to a state where the $otsn$ values of any two neighboring processes differ by at most $1$.

\begin{theorem} \label{thm:async-unison}
Starting from an arbitrary state, even if the auditable events occur at any frequency, the program in Figure \ref{fig:bounded-ar} converges to states where for any two neighbors $j$ and $k$: $otsn.j = otsn.k\ominus 1$, $otsn.j = otsn.k$, or $otsn.j = otsn.k\oplus 1$.
\end{theorem}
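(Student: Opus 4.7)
The plan is to reduce the claim to the known convergence result for asynchronous unison in \cite{cfg}. Specifically, I would argue that, viewed purely through the lens of the $otsn$ variables, the three actions $\bname1$, $\bname2$, and $\bname12$ are an instance of the asynchronous unison protocol, with auditable events playing the role of externally-triggered ``ticks.'' Under this correspondence, $otsn.j$ is the local clock of process $j$, addition is modulo $N^{2}+1$, the neighbor-bounded guard on $\bname1$ and $\bname2$ is the standard unison pre-condition (a process only advances when all neighbors lie within $[otsn.j \cdots otsn.j\oplus N]$), and $\bname12$ is the reset action used in \cite{cfg} to recover from ``far apart'' configurations caused by transient faults in the initial state.

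The proof would proceed in three steps. First, I would spell out the mapping: action $\bname1$ is a guarded tick (only fires when neighbors are within range $N$), action $\bname2$ is the delayed-catch-up step (advancing $otsn.j$ by one when some neighbor is ahead by $1$ to $N$ and the remaining neighbors are within range), and $\bname12$ resets any $otsn.j$ that falls outside the legal window $[otsn.k\ominus N \cdots otsn.k\oplus N]$ relative to some neighbor $k$. Second, I would check that the correctness argument from \cite{cfg} is insensitive to whether ticks are generated by the environment (auditable events) or endogenously: because $\bname1$ cannot fire unless the local neighborhood is already in range, an arbitrary number of auditable events cannot widen the gap between any two neighbors beyond what the unison invariant allows. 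In particular, consolidation (the fact that $j$ ignores a detection when its neighbors are not in range) is exactly the guard the unison analysis relies on to bound the drift. Third, I would invoke the convergence theorem of \cite{cfg}: from any initial state, after finitely many steps the system enters a configuration in which $|otsn.j - otsn.k| \le 1$ (modulo $N^{2}+1$) for every edge $(j,k)$, which is precisely the statement of Theorem~\ref{thm:async-unison}.

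The main obstacle I anticipate is the second step — justifying that adversarially frequent auditable events do not break the unison convergence argument. In \cite{cfg}, ticks are generated by the algorithm itself, whereas here they are generated by the environment. I would handle this by observing that the guard on $\bname1$ subordinates any tick attempt to the unison pre-condition, so from the protocol's point of view every tick is still ``locally legal.'' Thus the adversary can at worst schedule additional legal ticks, which the convergence proof of \cite{cfg} already tolerates (it shows convergence under any fair scheduler, including one that fires ticks maximally often). A secondary subtlety is that the modulus $N^{2}+1$ must be large enough to accommodate the longest path in the spanning structure plus one; this is exactly the value chosen in \cite{cfg} for an $N$-process network, so no extra work is needed. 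With these observations in place, the theorem follows directly from the cited convergence result.
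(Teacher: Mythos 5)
Your proposal takes essentially the same route as the paper: the paper's entire proof is ``Proof follows from \cite{cfg},'' relying on the observation (made in the surrounding text) that $\bname1$, $\bname2$, and $\bname12$ coincide with the asynchronous unison actions except for how ticks are triggered. Your write-up is a correct and considerably more explicit version of that reduction, in particular spelling out why environment-generated ticks guarded by the unison pre-condition do not invalidate the cited convergence argument.
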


\begin{proof}
Proof follows from \cite{cfg}.
\end{proof}

\begin{theorem} \label{thm:otsn-equal}
Starting from an arbitrary state, if the auditable events stop occurring, the program in Figure \ref{fig:bounded-ar} converges to states where for any two neighbors $j$ and $k$: $otsn.j = otsn.k$.
\end{theorem}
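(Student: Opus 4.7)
The plan is to invoke Theorem \ref{thm:async-unison} to reach the asynchronous unison regime, and then to argue that once auditable events cease the remaining dynamics reduces to a monotone ``catch-up'' that drives all $otsn$ values to a common value.

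First, by Theorem \ref{thm:async-unison}, regardless of the initial configuration and of the history of auditable events, the system converges to configurations in which every pair of neighbors $j,k$ satisfies $otsn.j \in \{otsn.k\ominus 1, otsn.k, otsn.k\oplus 1\}$. Once auditable events have stopped, action $\bname1$ is permanently disabled. Moreover, in any asynchronous-unison configuration the guard of $\bname12$, which requires $otsn.j$ to lie outside the window $[otsn.k\ominus N, otsn.k\oplus N]$ for some neighbor $k$, is false. Hence from this point on only $\bname2$ can alter any $otsn.j$, and its effect is to set $otsn.j := otsn.j\oplus 1$ at a process $j$ all of whose neighbors already hold $otsn.j$ or $otsn.j\oplus 1$, with at least one strictly ahead. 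Crucially, such an update preserves the unison invariant (each neighbor was in $\{v, v\oplus 1\}$ and so remains within one of the new value $v\oplus 1$), and $\bname2$ cannot fire at a process whose value equals the current maximum, so the global maximum $M$ is henceforth constant.

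Next I would use the potential $\Phi := \sum_{j}(M - otsn.j)$, which is well-defined as an ordinary integer sum because the unison invariant, connectivity, and the fact that the domain size $N^2+1$ exceeds the graph diameter together force all values to lie in a non-wrapping integer window of width at most $N-1$ beneath $M$. Each execution of $\bname2$ decreases $\Phi$ by exactly one, so it suffices to show that $\bname2$ is enabled somewhere whenever $\Phi > 0$. Let $m := \min_j otsn.j$; then $\Phi > 0$ implies $m < M$, and since the graph is connected some process $j$ with $otsn.j = m$ has a neighbor at $m\oplus 1$. Because $m$ is the global minimum, no neighbor of $j$ has value $m\ominus 1$, so every neighbor of $j$ lies in $\{m, m\oplus 1\}\subseteq [otsn.j, otsn.j\oplus N]$, and the guard of $\bname2$ is satisfied at $j$. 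Under weak fairness this step eventually fires, strictly decreasing $\Phi$; the main obstacle, namely showing that this enablement persists after each update, is handled by the invariant-preservation observation above. Since $\Phi$ is a nonnegative integer-valued potential that strictly decreases under fairness, it must reach $0$ in finitely many steps, at which point $otsn.j = M$ for every $j$, and in particular $otsn.j = otsn.k$ for every pair of neighbors, as required.
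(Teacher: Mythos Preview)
Your proof is correct and follows essentially the same approach as the paper: invoke Theorem~\ref{thm:async-unison} to reach the unison regime, observe that $\bname1$ and $\bname12$ are then disabled, and argue that $\bname2$ monotonically raises all $otsn$ values to the common maximum. The paper's version asserts this last step informally (``by executing action $\bname2$, each process increases its $otsn$ such that the $otsn$ values will be equal to $b$''), whereas you formalize it with a potential function and an explicit enablement argument at the current minimum; this adds rigor but does not change the underlying strategy.
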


\begin{proof}
As we mentioned in Theorem \ref{thm:async-unison}, the system is guaranteed to converge to a state where for any two neighbors $j$ and $k$ either $otsn.j = otsn.k\ominus 1$, $otsn.j = otsn.k$, or $otsn.j = otsn.k\oplus 1$.
Now, $otsn$ values of any two processes (even if they are not neighbors) differ by at most $N-1$.
In other words, there exists $a$ and $b$ such that for some processes $j$ and $k$, $otsn.j = a$ and  $otsn.k = b $, where $b$ is in the range $[a \cdots a \oplus N]$ and the $otsn$ values of remaining processes are in the range $[a \cdots b]$.

Hence, processes are not far apart each other and the action $\bname12$ cannot execute. In addition, action $\bname1$ cannot execute since the auditable events have stopped occurring. Hence, by executing action $\bname2$, each process increases its $otsn$ such that the $otsn$ values will be equal to $b$ for all processes.
\end{proof}

\begin{corollary}
Under the assumption that a process does not detect new auditable event until it is restored to legitimate states, we can guarantee that $otsn$ values will differ by no more than 1.
\end{corollary}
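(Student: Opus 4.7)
The plan is to combine the self-stabilization guarantee of Theorem~\ref{thm:async-unison} with the stronger synchronization enforced by the corollary's assumption. The argument splits naturally into a startup phase in which the $otsn$ values become globally equal, and a steady-state phase in which I must show that the spread never grows beyond one again.

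For the startup phase, Theorem~\ref{thm:async-unison} brings the system to a configuration in which every pair of neighbors $j,k$ satisfies $otsn.j \in \{otsn.k \ominus 1, otsn.k, otsn.k \oplus 1\}$. Under the corollary's hypothesis, no process may detect a fresh auditable event until it is in legitimate states, and the definition of $S1$ requires all $otsn$ values to be equal. Hence Theorem~\ref{thm:otsn-equal} applies unimpeded during this interval and forces every $otsn.j$ to a common value $x$.

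For the steady-state phase, I would reason by induction on successive firings of $\bname1$, starting from a configuration with all $otsn$ equal to $x$. When a new event triggers $\bname1$ at some process $j$, $otsn.j$ becomes $x \oplus 1$. This increment can only be propagated by $\bname2$, one hop at a time; action $\bname12$ is never enabled because no pair of processes is ever close to $N$ apart. The key invariant to carry through is: at every reachable state before the next firing of $\bname1$, each $otsn$ value lies in $\{x, x \oplus 1\}$. This follows from the guards of $\bname2$, which raise $otsn.j$ by exactly one and require all neighbors of $j$ to already lie in $[otsn.j \cdots otsn.j \oplus N]$, so no process can overshoot $x \oplus 1$. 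By the corollary's hypothesis, $\bname1$ remains blocked at every process until the current restoration completes, and by Theorem~\ref{thm:otsn-equal} the restoration drives every $otsn$ to $x \oplus 1$ before the next event is allowed; iterating this argument maintains the bound. The main obstacle is verifying that no process ever overshoots $x \oplus 1$ while the wave is in progress, which I would phrase as a straightforward induction on computation length, using the monotone \emph{rising tide} structure imposed by the ``all neighbors at most one step behind'' guard of $\bname2$.
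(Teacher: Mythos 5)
Your proof is correct and matches the argument the paper intends: the corollary is stated there without proof as an immediate consequence of Theorems~\ref{thm:async-unison} and~\ref{thm:otsn-equal} together with the observation that, under the stated assumption, each process fires $\bname1$ at most once per restoration cycle, so from an all-equal configuration at $x$ no $otsn$ value can climb past $x \oplus 1$ before restoration re-equalizes everything. One harmless imprecision: $\bname1$ is not literally blocked at \emph{every} process during restoration (a process that has not yet detected or learned of the event is still, from its own point of view, in legitimate states and may detect its own event), but any such process is necessarily still at $x$ and only rises to $x \oplus 1$, so your invariant survives.
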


\begin{figure}[t]
\begin{tabbing}
${\bf \bname1}::$ \{{\em $j$ detects an \au\}}\\
\=\= \hspace*{0.75cm} \=  $\forall k: otsn.k \in [otsn.j \cdots otsn.j \oplus N]$\\
\=\= \hspace*{0.75cm} \= $\longrightarrow otsn.j:=otsn.j \oplus 1 $\\\\


${\bf \bname2}:: \forall k: otsn.k \in [otsn.j \cdots otsn.j \oplus N] \ \wedge$\\
\>\>\> \hspace*{1.7cm} $\exists k: otsn.k \in [otsn.j \oplus 1 \cdots otsn.j \oplus N]$\\
\>\>\> $\longrightarrow otsn.j := otsn.j \oplus 1$\\\\


${\bf \bname3}:: P.j = j \wedge st.j \neq \bot \wedge \ otsn.j \neq ctsn.j$\\

\>\>\> $\longrightarrow st.j, sn.j, res.j := \bot, sn.j+1, min(res.j+1,1)$\\\\


${\bf \bname4}:: st.(P.j) = \bot \wedge sn.j \neq sn.(P.j) \wedge l.j=l.(P.j)$\\

\>\>\> $\longrightarrow st.j, sn.j, res.j := \bot, sn.(P.j),res.(P.j) $\\\\


${\bf \bname5}::{\bf \name5}$\\\\


${\bf \bname6}:: ctsn.j \neq ctsn.(P.j) \longrightarrow ctsn.j := ctsn.(P.j)$\\\\


${\bf \bname7}::{\bf \name7}$\\\\


${\bf \bname8}:: st.(P.j) = \restore \wedge sn.j \neq sn.(P.j) \ \wedge$ \\
\>\>\> \hspace*{39mm} $l.j=l.k \wedge otsn.j=ctsn.j$\\
\>\>\> $\longrightarrow st.j, sn.j, res.j := \restore, sn.(P.j), res.(P.j)$\\\\

${\bf \bname9}$--${\bf \bname10}::{\bf \name9}$--${\bf \name10}$\\\\


${\bf \bname11}:: \langle$ {\em any tree correction action that affects process $j$} $\rangle$\\
\>\>\> $ \longrightarrow res.j:=-1$\\\\


${\bf \bname12}:: (otsn.j \not \in [otsn.k \ominus N \cdots otsn.k \oplus N]) \ \wedge$\\
\>\>\> \hspace*{2.6cm} $(otsn.j > otsn.k) \longrightarrow otsn.j := 0$


\end{tabbing}
\caption{Bounded \ar protocol.}
\label{fig:bounded-ar}
\end{figure}



\subsection{Bounding $ctsn$}

The above approach bounds the $otsn$ value. However, the same approach cannot be used to bound $ctsn$ value. This is due to the fact that the value to which $ctsn$ converges may not be related to the value that $otsn$ converges to. This is unacceptable and, hence, we use the following approach to bound $ctsn$.

First, in action $\name6$, the guard $ctsn.j > ctsn.k$ needs to be replaced by $ctsn.j \neq ctsn.(P.j)$. Thus, the new action $\bname6$ is shown in Figure \ref{fig:bounded-ar}.


Second, we require to replace the notion of {\em greater than} by {\em not equal} in all the actions of Figure \ref{fig:ar} since we are bounding the $otsn$ and $ctsn$ values. Hence, we change action $\name3$ to $\bname3$ in the program in Figure \ref{fig:bounded-ar}.

With these changes, starting from an arbitrary state, after the auditable events stop, the system will eventually reach to states where all $otsn$ values are equal (cf. Theorem \ref{thm:otsn-equal}). Subsequently, if $otsn$ and $ctsn$ values of the \leader process are different, it will execute action $\name7$ to restore the system to an auditable state. Then, the \leader process will reset its $ctsn$ value to be equal to $otsn$ value. Finally, these values will be copied by other processes using action $\bname6$. Hence, eventually all $ctsn$ values will be equal.

\begin{theorem}
Starting from an arbitrary state, if the auditable events stop occurring, the program in Figure \ref{fig:bounded-ar} reaches to states where for any two neighbors $j$ and $k$: $otsn.j = otsn.k=ctsn.j=ctsn.k$.
\end{theorem}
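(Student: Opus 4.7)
The plan is to build on Theorem~\ref{thm:otsn-equal}, which already tells us that once \aus stop, the system reaches a state in which all $otsn$ values are equal to some common value, call it $v$. I would then argue in two phases that the $ctsn$ values catch up to $v$ along the tree.

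\emph{Phase 1 (leader's $ctsn$ becomes $v$).} Consider the \leader process $\ell$. If $ctsn.\ell = v$ already, there is nothing to do. Otherwise, the guard of $\bname3$ becomes continuously enabled (since $otsn.\ell \neq ctsn.\ell$ and $otsn$ is stable at $v$ by Theorem~\ref{thm:otsn-equal}), so $\ell$ fires $\bname3$ and starts a $\bot$ wave with a fresh $sn.\ell$. Action $\bname4$ propagates $\bot$ down the tree, and $\bname5$ propagates $\top$ back up. The $res$ discipline, whose behaviour is identical to that in Figure~\ref{fig:ar}, guarantees (by the argument already used in Theorem~\ref{thm:max}) that the wave may need to be reissued a finite number of times if the initial state was inconsistent or the tree reconfigures, but eventually $\ell$ completes a wave with $res.\ell = 1$ and executes the else-branch of $\bname5$, assigning $ctsn.\ell := otsn.\ell = v$. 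Because $otsn$ is now frozen at $v$ (no \aus, so neither $\bname1$ nor $\bname2$ can change it), the guard of $\bname3$ is permanently disabled thereafter and no further $\bot$ waves are launched.

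\emph{Phase 2 ($ctsn$ propagates down the tree).} Once $ctsn.\ell = v$, action $\bname6$, whose new guard is $ctsn.j \neq ctsn.(P.j)$, becomes enabled exactly at processes whose $ctsn$ disagrees with that of their parent. Since the tree is acyclic and finite and each firing of $\bname6$ strictly shrinks the set of processes whose $ctsn$ differs from their parent's, after finitely many steps we have $ctsn.j = v$ for every process $j$. Combined with Phase~1 and Theorem~\ref{thm:otsn-equal}, this gives $otsn.j = ctsn.j = v$ for every process, which in particular establishes the required equality $otsn.j = otsn.k = ctsn.j = ctsn.k$ for any pair of neighbours.

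The main obstacle, in my view, is Phase~1: ruling out an infinite cycle of $\bot$/$\top$ wave attempts. The worry is that from an arbitrary state the $sn$, $st$ and $res$ variables may be corrupted, causing the $\bot$ wave to stall or the leader to restart it indefinitely. The resolution is to appeal to the same convergence argument used for Theorem~\ref{thm:max}: once $otsn$ has stabilised to $v$, each subsequent wave carries a strictly fresh $sn.\ell$, action $\bname{10}$ (corresponding to $\name{10}$) cleans up any locally inconsistent $(st,sn)$ pairs via $\Gd.j$, and $\bname{11}$ resets $res$ after tree reconfigurations, so each wave either completes with $res.\ell=1$ (after which $ctsn.\ell$ is updated and we are done) or makes measurable progress toward such a wave. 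A secondary subtlety worth flagging is that the initial state may satisfy $ctsn.j > otsn.j$ since the guards of $\bname3$ and $\bname8$ use $\neq$ rather than $>$; this is precisely why the $\bot$ wave must fire even in that direction, and the argument above still applies because the leader's final assignment $ctsn.\ell := otsn.\ell$ followed by downward propagation via $\bname6$ erases the discrepancy.
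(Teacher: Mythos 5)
Your proof follows essentially the same route as the paper's (much terser) argument: invoke Theorem~\ref{thm:otsn-equal} to equalize the $otsn$ values, let the leader complete a $\bot$/$\top$ wave so that $\bname5$ assigns $ctsn := otsn$, and then propagate that value down the tree via $\bname6$ (the paper's text misattributes this last step to $\bname7$). One small repair: a single firing of $\bname6$ need not shrink the set of parent--child disagreements, since after $j$ copies its parent's $ctsn$ a child of $j$ may newly disagree with $j$, so Phase~2 should be closed by induction on tree depth from the leader (whose $ctsn$ is frozen at $v$) rather than by the ``strictly shrinking set'' measure you state.
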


\begin{proof}
According to Theorem \ref{thm:otsn-equal}, all $otsn$ values will eventually be equal. Moreover, when the \leader ensures that all processes are aware of the auditable events, it executes action $\name5$ and updates its $ctsn$ value by its $otsn$ value. Consequently, when the rest of processes detect this change, they execute action $\bname7$ and update their $ctsn$ values. Hence, eventually all $ctsn$ values will be equal.
\end{proof}

Finally, we observe that even with these changes if a single auditable event occurs in a legitimate state (where all $otsn$ and $ctsn$ values are equal) then the system would reach a state in the auditable state, i.e., Theorem \ref{thm:t-s2} still holds true with this change.


\subsection{Bounding $sn$}

Our goal in bounding $sn$ is to only maintain $sn \ mod \ 2$ with some additional changes. To identify these changes, first, we make some observations about how $sn$ values might change during the computation in the presence of faults such as process failure but in the absence of transient faults.

Now, consider the case where we begin with a legitimate state of the \ar protocol where all $sn$ values are equal to $x$. At this time if the \leader process executes actions $\name3$ or $\name7$ then $sn$ value of the \leader process will be set to $x+1$. Now, consider the $sn$ values of processes on any path from the \leader process to a leaf process. It is straightforward to observe that some initial processes on this path will have the $sn$ value equal to $x+1$ and the rest of the processes on this path (possibly none) will have the $sn$ value equal to $x$. Even if some processes fail, this property would be preserved in the part of the tree that is still connected to the \leader process. However, if some of the processes in the subtree of the failed process (re)join the tree, this property may be violated.
In the program in Figure \ref{fig:ar}, where $sn$ values are unbounded, these newly (re)joined processes can easily identify such a situation. However, if processes only maintain the least significant bit of $sn$, this may not be possible. Hence, this newly rejoined process should force the \leader process to redo its task for recovering the system to either auditable state (i.e., $S2$) or to legitimate states (i.e., $S1$). As described above there can be at most two possible values of $sn$ in the tree that is connected to the \leader process. Hence, it suffices that the newly rejoined process aborts those two computations. We can achieve this by actions $\bname3$, $\bname4$, $\bname8$, and $\bname11$ in the program in Figure \ref{fig:bounded-ar}.

In this program, consider the case where one auditable event occurs and no other auditable event occurs until the system is restored to the legitimate states. In this case, for any two processes $j$ and $k$, $otsn.k$ will be either $otsn.j$ or $otsn.j \oplus 1$. In other words, we redefine predicates $S2$ and $AS$ in the following. The definitions of $S1$ and $T$ remain unchanged.

\noindent $AS': \forall j,k: otsn.k \in [max(ctsn.j) \cdots max(ctsn.j) \oplus 1]$

\noindent $S2': \forall j,k: ((st.j = \top \vee st.j = \bot) \ \wedge$\\
\hspace*{0.4cm} $(P.j=j \wedge st.j=\top \Rightarrow otsn.k \in [otsn.j \cdots otsn.j \oplus N])$

\begin{theorem} \label{thm:s1-s1-1}
Starting from an arbitrary state in $T$, if exactly one \au occurs, the system is guaranteed to reach a state in $S2$ and then converge to $S1$ provided that faults stop occurring. Moreover, the states reached in such computation are a subset of $AS'$.
\end{theorem}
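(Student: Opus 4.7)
The plan is to decompose the computation into three successive phases: (i) the propagation of the single \au through the $otsn$ field, (ii) the $\bot/\top$ wave that drives the system into $S2$, and (iii) the \restoration wave back to $S1$. Throughout all three phases I would maintain the invariant that no $otsn$ value ever strays outside the interval $[x \cdots x \oplus 1]$, where $x$ is the common $otsn = ctsn$ value in the starting state in $T$. This invariant immediately yields $AS'$, because $\max(ctsn.j)$ is always either $x$ or $x \oplus 1$ and every $otsn$ value lies within one step of that maximum.

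First I would analyze the \au-propagation phase. Starting in $T$, the unique \au causes some process $p$ to execute $\bname1$, raising $otsn.p$ from $x$ to $x \oplus 1$; because no further \aus occur, $\bname1$ is never enabled again. Thereafter only $\bname2$ can modify $otsn$, and since the global spread of $otsn$ is at most $1$, $\bname2$ can only raise a value from $x$ to $x \oplus 1$, while $\bname12$ is disabled (the spread is far below the window of width $N$). Applying Theorem \ref{thm:otsn-equal} in this restricted setting shows that within finitely many steps all $otsn$ values equal $x \oplus 1$, while all $ctsn$ values are still $x$. Every state on this prefix therefore lies in $AS'$.

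Next I would reuse the argument of Theorem \ref{thm:t-s2} to drive the system into $S2$. The leader eventually observes $otsn.j \neq ctsn.j$ and fires $\bname3$, initiating a $\bot$ wave with a fresh sequence number; the wave descends via $\bname4$ and ascends via $\bname5$. The proof of Theorem \ref{thm:t-s2} transfers essentially unchanged because its reasoning about $sn$, $st$, and $res$ does not depend on the representation of $otsn$, only on the fact that all $otsn$ values are eventually equal. Faults that disturb the wave merely reset $res$ through $\bname11$ and force the leader to re-initiate, but since faults stop occurring the wave eventually completes with $res = 1$ at the leader, at which point the assignment $ctsn.j := otsn.j$ inside $\bname5$ places the system in $S2$. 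Throughout this phase $otsn$ values stay at $x \oplus 1$, so $AS'$ continues to hold (with $\max(ctsn.j)$ jumping from $x$ to $x \oplus 1$ only at the very end, when the leader updates its own $ctsn$).

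Finally, from $S2$ the leader fires $\bname7$, issuing a \restoration wave that is propagated by $\bname8$ and closed by $\bname9$, while $\bname6$ copies the new $ctsn$ value down the tree. Once every process is in state $\stable$ and all $ctsn$ values equal $x \oplus 1$, the system is in $T$, and Theorem \ref{thm:t-s1} yields convergence to $S1$. The main obstacle is the bookkeeping in the first phase: I must argue carefully that the \emph{exactly one} hypothesis, combined with the guards of $\bname1$ and $\bname2$, prevents any $otsn$ from straying more than $1$ from any other, so that neither a second invocation of $\bname1$ nor the far-apart reset $\bname12$ can ever fire. It is precisely this bounded spread that lets Theorem \ref{thm:otsn-equal} apply and forces every intermediate state into $AS'$.
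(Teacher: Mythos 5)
Your proposal is correct and follows essentially the same route as the paper, whose own proof of this theorem is a one-line appeal to ``the explanations above'' (i.e., Theorem~\ref{thm:otsn-equal} for equalizing the $otsn$ values, the $\bot/\top$-wave argument of Theorem~\ref{thm:t-s2} for reaching $S2$, and the restoration wave leading back to $T$ and hence to $S1$ via Theorem~\ref{thm:t-s1}). Your three-phase decomposition and the explicit invariant that every $otsn$ value stays in $[x \cdots x \oplus 1]$ (which is what makes the $AS'$ containment and the non-firing of $\bname12$ precise) simply spell out the details the paper leaves implicit.
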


\begin{proof}
If exactly one \au occurs, the system is guaranteed to reach $S2$ and then it converges to $S1$ following the explanations above.
\end{proof}

We can easily extend the above theorem to allow upto $N$ auditable events before the system is restored to its legitimate state. The choice of $N$ indicates that each process detects the auditable event at most once before the system is restored to its legitimate states. In other words, a process ignores auditable events after it has detected one and the system has not been restored corresponding to that event. In this case, the constraint $AS'$ above needs to be changed to:

\noindent $AS'': \forall j,k: otsn.k \in [max(ctsn.j) \cdots max(ctsn.j) \oplus N]$

\begin{theorem} \label{thm:s1-s1-n}
Starting from an arbitrary state in $T$, if upto $N$ \aus occur, the system is guaranteed to reach a state in $S2$ and then converge to $S1$ provided that faults stop occurring. Moreover, the states reached in such computation are a subset of $AS''$.
\end{theorem}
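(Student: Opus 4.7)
The plan is to adapt the argument of Theorem \ref{thm:s1-s1-1} by carrying it through with the relaxed bound of $N$ events instead of one. The proof will proceed in three stages: (i) show invariantly that the reachable states lie inside $AS''$, (ii) show that once auditable events cease the system reaches $S2$, and (iii) invoke the already established convergence machinery to reach $S1$.

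For stage (i), I would argue that the spread of $otsn$ values across all processes never exceeds $N$ in the modular ring. Starting from a state in $T$, every $otsn.k$ equals $\max(ctsn.j)$. Under the stated ignoring policy, each of the $N$ processes executes $\bname1$ at most once per restoration cycle, so over the entire computation at most $N$ increments of size $1$ are contributed to local $otsn$'s. Action $\bname2$ is monotone and only raises a process's $otsn$ to match a neighbor's, so it cannot widen the global spread. Hence at all times every $otsn.k$ lies in $[\max(ctsn.j) \cdots \max(ctsn.j) \oplus N]$, which is precisely $AS''$. Because the domain $N^2+1$ exceeds $2N$, no pair $(otsn.j,otsn.k)$ ever satisfies the guard of $\bname12$, so no spurious reset to $0$ occurs during the window of interest.

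For stage (ii), after the \aus stop, Theorem \ref{thm:otsn-equal} delivers a state where all $otsn$ values equal some $y\in[\max(ctsn.j)\cdots \max(ctsn.j)\oplus N]$. From here the argument is essentially identical to that of Theorem \ref{thm:t-s2}: the \leader executes $\bname3$ to start a $\bot$ wave; $\bname4$ propagates it down the tree; $\bname5$ completes the $\top$ wave back up; and when the \leader finally reaches $\top$ with $res=1$ it sets $ctsn := y$, placing the system in $S2$. Tree reconfigurations are tolerated by $\bname11$ resetting $res$ and forcing a fresh $\bot$ wave, exactly as in the base theorem. For stage (iii), starting from $S2$ the \leader issues a restoration via $\bname7$, processes propagate it through $\bname8$, leaves complete it with $\bname9$, and $\bname6$ catches $ctsn$ up to $y$; when all $ctsn$ values equal $y$ the system lies in $T$, and Theorems \ref{thm:t-s1} and \ref{thm:closure-t} yield convergence to $S1$. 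The main obstacle is making the spread-bound argument watertight when restoration waves and event-propagation interleave; the crucial ingredient is the at-most-one-unignored-event-per-cycle assumption, which together with the monotonicity of $\bname2$ keeps the window $[\max(ctsn.j)\cdots \max(ctsn.j)\oplus N]$ invariant throughout the computation.
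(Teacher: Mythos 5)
Your proposal is correct and follows essentially the same route as the paper's (very terse) proof: the at-most-one-unignored-event-per-process assumption bounds the total number of $\bname1$ increments by $N$, actions $\bname1$/$\bname2$ equalize the $otsn$ values so the system reaches $S2$, and the leader's restoration wave then drives it to $S1$ via the earlier convergence theorems. Your stage (i) argument for containment in $AS''$ (spread at most $N$, domain $N^2+1 > 2N$ so $\bname12$ never fires) is actually more explicit than anything in the paper, which asserts the $AS''$ claim without argument.
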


\begin{proof}
Since each process detects at most one auditable event before the restoration, executing actions $\bname1$ and $\bname2$, all $otsn$ values will become equal and the system reaches $S2$. Consequently, the \leader initialize a restoration wave and the system converges to $S1$.
\end{proof}

Since the domain of $otsn$ is bounded, it is potentially possible that $j$ starts from a state where $otsn.j$ equals $x$ and there are enough auditable events so that the value of $otsn.j$ rolls over back to $x$. In our algorithm, in this case, some auditable events may be lost. We believe that this would be acceptable for many applications since the number of auditable events being so high is highly unlikely. Also, the domain of $otsn$ and $ctsn$ values can be increased to reduce this problem further. This problem can also be resolved by ensuring that the number of events detected by a process within a given time-span is bounded by allowing the process to ignore frequent auditable events.

We note that theoretically the above assumption is not required. The basic idea for dealing with this is as follows: Each process maintains a bit  $changed.j$ that is set to true whenever $otsn$ value changes. Hence, even if $otsn$ value rolls over to the initial value, $changed.j$ would still be true. It would be used to execute action $\bname3$ so that the system would be restored to $S2$ and then to $S1$. In this case, however, another computation would be required to reset $changed.j$ back to false. The details of this protocol are outside the scope of this paper.

Finally, we note that even if the $otsn$ values roll over or they are corrupted to an arbitrary value, the system will still recover to states in $S1$. In particular, starting from an arbitrary state, after faults and auditable events stop, the system will reach a state where $otsn$ values are equal. In this case, depending upon the $ctsn$ values, either the system will be restored to the auditable states (by actions $\bname3$-$\bname6$) or to legitimate states (by action $\bname10$). In the former case, the system will be restored to $S1$ subsequently. Hence, we have the following theorem.

\begin{theorem}
Auditable restoration program is stabilizing, i.e., starting from an arbitrary state in $T$, after \aus and faults stop, the program converges to $S1$.
\end{theorem}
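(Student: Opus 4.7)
The plan is to cascade the convergence results already established in the paper, splitting the recovery into three temporal phases after faults and auditable events cease: $otsn$-equalization, a (possibly empty) auditable wave to reach $S2$, and finally a restoration wave into $S1$.

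First I would invoke Theorem~\ref{thm:otsn-equal} to conclude that, regardless of how $otsn$ was corrupted, after auditable events stop the values $otsn.j$ converge in finite time to a common value $v$ at every process. This step is essentially an appeal to the bounded asynchronous-unison result of Theorem~\ref{thm:async-unison}. The only subtlety is that action $\bname12$ may reset an outlier $otsn.j$ to $0$; I would verify that this can happen only finitely often (each such reset strictly shrinks the global spread of $otsn$ values), so unison eventually takes over and no more $\bname12$ resets can fire.

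Next I would case-split on the $ctsn$ values once $otsn$ has equalized to $v$. In the first case every $ctsn.k$ already equals $v$, so the system lies in $T$ apart from possible inconsistencies in $st$, $sn$, $res$; action $\bname10$, together with the $\Gd.j$ predicate, cleans those up by overwriting each child with its parent's values, and Theorem~\ref{thm:t-s1} (with Theorem~\ref{thm:closure-t}) then delivers convergence to $S1$. In the second case some $ctsn.j \ne v$ at the leader, so the guard of $\bname3$ becomes enabled; following the argument of Theorem~\ref{thm:t-s2} (whose proof already handles concurrent $otsn$ propagation and leader failure), the leader's $\bot$-wave drives the system into $S2$, after which $\bname7$ fires and, by Theorem~\ref{thm:s1-s1}, a subsequent restoration wave carries the system the rest of the way into $S1$.

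The hard part will be the bounded-$sn$ regime: with only a parity bit per process, transient faults can seed the tree with ghost $\bot$- or $\restore$-waves whose $sn$ values happen to align with future waves the leader will legitimately issue. I expect to handle this by arguing that each ghost is a dead-end: either the leader is not presently in the matching $st$-value, in which case the wave cannot propagate past $\bname4$ or $\bname8$ and is eventually overwritten by $\bname10$; or it aliases a real wave the leader later initiates, in which case any inconsistency is exposed through $res$ (reset by $\bname{11}$ whenever the tree is reconfigured) and causes the leader to restart. Since faults occur only finitely often, $\bname{11}$ fires only finitely often, and after its last firing the two earlier phases go through unperturbed, yielding convergence to $S1$. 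A small supporting claim I would need is that the composition of finitely many aborted waves with a final successful one still terminates in $S1$, which follows because each abort leaves the system in a state to which the case analysis above reapplies.
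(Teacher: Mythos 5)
Your proof follows essentially the same route as the paper: equalize the $otsn$ values via the bounded asynchronous-unison result (Theorem~\ref{thm:otsn-equal}), then case-split on the $ctsn$ values so that the system either converges directly to $S1$ or first passes through $S2$ via a $\bot$-wave before the restoration wave completes, finally wrapping everything with Theorem~\ref{thm:s1-s1-n}. The paper's own proof is a terse appeal to exactly this chain of reasoning; your extra care about $\bname12$ firing only finitely often and about ghost waves in the bounded-$sn$ regime merely makes explicit details the paper leaves implicit.
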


\begin{proof}
Following Theorem \ref{thm:s1-s1-n} and explanations above, after \aus and faults stop, the program converges to $S1$.
\end{proof}

\section{Related Work}
\label{sec:related}

{\bf Stabilizing Systems. } \
There are numerous approaches for recovering a program to its set of legitimate states. Arora and Gouda's distributed reset technique introduced in \cite{tcAroraG94} is directly related to our work and ensures that after completing the reset, every process in the system is in its legitimate states. However, their work does not cover faults (e.g., process failure) during the reset process.
In \cite{reset}, we extended the distributed reset technique so that if the reset process is initialized and some faults occur, the reset process works correctly.
In \cite{dcKatz93}, Katz and Perry showed a method called global checking and correction to periodically do a snapshot of the system and reset the computation if a global inconsistency is detected. This method applies to several asynchronous protocols to convert them into their stabilizing equivalent, but is rather expensive and insufficient both in time and space.

Moreover, in nonmasking fault-tolerance (e.g., \cite{jhsnArora96, a96}), we have the notion of fault-span too (similar to $T$ in Definition \ref{def:ar}) from where recovery to the invariant is provided. Also, in nonmasking fault-tolerance, if the program goes to $AS-T$, it {\em may} recover to $T$. By contrast, in \ar, if the program reaches a state in $AS-T$, it is required that it first restores to $S2$ and then to $S1$. Hence, \ar is stronger than the notion of nonmasking fault-tolerance.


\Ar can be considered as a special case of nonmasking-failsafe multitolerance (e.g., \cite{tosemEbnenasirK11}), where a program that is subject to two types of faults $F_f$ and $F_n$ provides (i) failsafe fault tolerance when $F_f$ occurs, (ii) nonmasking tolerance in the presence of $F_n$, and (iii) no guarantees if both $F_f$  and $F_n$  occur in the same computation.

{\bf Tamper Evident Systems. }
These systems \cite{aegis,ltmlbmtamper00,tamper2} use an architecture to protect the program from external software/hardware attacks. An example of such architecture, AEGIS \cite{aegis}, relies on a single processor chip and can be used to satisfy both integrity and confidentiality properties of an application.
AEGIS is designed to protect a program from external software and physical attacks, but did not provide any protection against side-channel or covert-channel attacks. %
In AEGIS, there is a notion of recovery in the presence of a security intruder where the system recovers to a `less useful' state where it declares that the current operation cannot be completed due to security attacks. However, the notion of fault-tolerance is not considered. In the context of  Figure \ref{fig:ar}, in AEGIS, $AS-T$ equals $\neg S1$, and $S2$ corresponds to the case where tampering has been detected.

{\bf Byzantine-Tolerant Systems. } \
The notion of Byzantine faults \cite{lsp} has been studied in a great deal in the context of fault-tolerant systems. Byzantine faults capture the notion of a malicious user being part of the system. Typically, Byzantine fault is mitigated by having several replicas and assuming that the number of malicious replicas is less than a threshold (typically, less than $\frac{1}{3}$rd of the total replicas). Compared with Figure \ref{fig:ar}, $T$ captures the states reached by Byzantine replicas. However, no guarantees are provided outside $T$.

{\bf Byzantine-Stabilizing Systems. } \
The notion of Byzantine faults and stabilization have been combined in \cite{na02,dw95,m06}. In these systems, as long as the number of Byzantine faults is below a threshold, the system provides the desired functionality. In the event the number of Byzantine processes increases beyond the threshold temporarily, the system eventually recovers to legitimate state. Similar to systems that tolerate Byzantine faults, these systems only tolerate a specific malicious behavior performed by the adversary. It does not address active attacks similar to that permitted by Dolev-Yao attacker \cite{dy83}.

In all the aforementioned methods the goal is to restore the system to the legitimate states. In our work, if some auditable events occur, we do not recover the system to the legitimate states. Instead, we recover the system to restoration state where all processes are aware of the auditable events occurred. Moreover, in \cite{tcAroraG94}, if a process requests a new reset wave while the last reset wave is still in process, the new reset will be ignored. Nevertheless, in our work, we cannot ignore auditable events and all processes will eventually be aware of these events and the system remains in the auditable state as long as the \aus continue occurring. Also the aforementioned techniques cannot be used to bound $otsn$ value in our protocol. 
\vspace*{-2mm}
\section{Conclusion and Future Work}
\label{sec:concl}	

In this paper, we presented an algorithm for {\it \ar} of distributed systems. This problem is motivated in part by the need for dealing with conflicting requirements. Examples of such requirements include cases where access must be restricted but in some access entirely preventing access is less desirable than some unauthorized access. This also allows one to deal with systems where resistance to tampering/unauthorized access is based on user norms or a legal threat as opposed to a technical guarantee that tampering cannot occur. In other words, such systems provide a {\em In-case-of-emergency-break-glass} method for access. By design, these access methods are detectable and called auditable events in our work.

While such an approach often suffices for centralized systems, it is insufficient for distributed systems. In particular, in a distributed system, only one process will be aware of such auditable events. This is unacceptable in a distributed system. Specifically, it is essential that all (relevant) processes detect this auditable event so that they can provide differential service if desired. We denote such states as {\em auditable states}.

Auditable events also differ from the typical {\em In-case-of-emergency-break-glass} events. Specifically, the latter are one-time events and cannot repeat themselves. By contrast, auditable events provide the potential for multiple occurrences. Also, they provide a mechanism for {\em clearing} these events. However, the clearing must be performed after all processes are aware of them and after initiated by an authorized process.

Our program guarantees that after auditable events occur the program is guaranteed to reach an auditable state where all processes are aware of the auditable event and the authorized process is aware of this and can initiate the restoration (a.k.a. clearing) operation. The recovery to auditable state is guaranteed even if it is perturbed by finite number of auditable events or faults.
It also guarantees that no process can begin the task of restoration until recovery to auditable states is complete.
Moreover, after the authorized process begins the restoration operation, it is guaranteed to complete even if it is perturbed by a finite number of faults. However, it will be aborted if it is perturbed by new auditable events.

Our program is stabilizing in that starting from an arbitrary state, the program is guaranteed to reach a state from where future auditable events will be handled correctly. It also utilizes only finite states, i.e., the values of all variables involved in it are bounded.

We are currently investigating the design and analysis of \ar of System-on-Chip (SoC) systems in the context of the IEEE SystemC language. Our objective here is to design systems that facilitate reasoning about what they do and what they do not do in the presence of \aus. Second, we plan to study the application of \ar in game theory (and vice versa).

\vspace*{-2mm}

\bibliographystyle{plain}
\bibliography{bibliography}

\end{document}